\definecolor{myblue}{RGB}{0,50,200}
\newtheorem{theorem}{Theorem}
\newtheorem{lemma}[theorem]{Lemma}
\newcommand{\mca}{\mathcal}
\newcommand{\mbb}{\mathbb}
\newcommand{\mds}{\mathds}
\newcommand{\kvec}[1]{|#1)}
\newcommand{\bvec}[1]{(#1|}
\newcommand{\vpi}{\vb*{\pi}}
\DeclareMathOperator{\mvar}{var}
\DeclareMathOperator{\sign}{sgn}
\begin{document}
\title{Fundamental bounds on precision and response for quantum trajectory observables}

\author{Tan Van Vu}
\email{tan.vu@yukawa.kyoto-u.ac.jp}
\affiliation{Center for Gravitational Physics and Quantum Information, Yukawa Institute for Theoretical Physics, Kyoto University, Kitashirakawa Oiwakecho, Sakyo-ku, Kyoto 606-8502, Japan}

\date{\today}

\begin{abstract}
The precision and response of trajectory observables offer valuable insights into the behavior of nonequilibrium systems. For classical systems, trade-offs between these characteristics and thermodynamic costs, such as entropy production and dynamical activity, have been established through uncertainty relations. Quantum systems, however, present unique challenges, where quantum coherence can enhance precision and violate classical uncertainty relations. In this study, we derive trade-off relations for stochastic observables in Markovian open quantum systems. Specifically, we present three key results: (i) a quantum generalization of the thermo-kinetic uncertainty relation, which bounds the relative fluctuations of currents in terms of entropy production and dynamical activity; (ii) a quantum inverse uncertainty relation, which constrains the relative fluctuations of arbitrary counting observables based on their instantaneous fluctuations and the spectral gap of the symmetrized Liouvillian; and (iii) a quantum response kinetic uncertainty relation, which bounds the response of general observables to kinetic perturbations in terms of dynamical activity. These fundamental bounds, validated numerically using a three-level maser and a boundary-driven XXZ spin chain, provide a comprehensive framework for understanding the interplay between precision, response, and thermodynamic costs in quantum systems.
\end{abstract}

\pacs{}
\maketitle

\section{Introduction}
Physical systems in nature and experiments are generally driven out of equilibrium and subject to significant fluctuations. For Markovian systems, ensemble dynamics can be unraveled into stochastic trajectories, where physically relevant observables can be defined as stochastic quantities \cite{Landi.2024.PRXQ}. Examples include the particle or heat current transported from a source to a target, as well as the distance a molecular motor travels within a finite time. For such observables, two critical aspects are the extent of their relative fluctuations (i.e., precision) and their sensitivity to small perturbations in control parameters (i.e., response). Understanding how these factors are constrained by thermodynamic costs is not only of theoretical importance but also provides valuable tools for thermodynamic inference \cite{Seifert.2019.ARCMP}. Over the past few decades, this area has seen substantial progress, particularly with the development of quantum stochastic thermodynamics for microscopic systems \cite{Sekimoto.2010,Seifert.2012.RPP,Vinjanampathy.2016.CP,Goold.2016.JPA,Deffner.2019}.

{\renewcommand{\arraystretch}{2.5}
\begin{table*}[t]
\label{table:res.sum}
\begin{tabular}{|c|c|c|}
\hline
Main results & Formulation & Applicable observables\\[2.0ex]
\hline\hline
Quantum thermo-kinetic uncertainty relation [Eq.~\eqref{eq:main.result.1}] & $\dfrac{F_\phi}{(1+\delta_\phi)^2}\ge\dfrac{4a}{\sigma^2}\Phi\qty(\dfrac{\sigma}{2a})^2\ge\max\qty(\dfrac{2}{\sigma},\dfrac{1}{a})$ & currents \\ [2.0ex]
\hline
Quantum inverse uncertainty relation [Eq.~\eqref{eq:main.result.2}] & $F_\phi\le\dfrac{\ev{J_2,\pi}}{\ev{J_1,\pi}^2}\qty(1+\dfrac{2\kappa}{g_s})$ & counting observables\\[2.0ex]
\hline
Quantum response kinetic uncertainty relation [Eq.~\eqref{eq:main.result.3}] & $\dfrac{\|\nabla\ev{f(\vb*{\phi})}\|_1^2}{\mvar[f(\vb*{\phi})]}\le\tau a$ & arbitrary\\[2.0ex]
\hline
\end{tabular}
\centering
\caption{Summary of our main results. The first result applies specifically to current-type observables, while the second one is broadly applicable to any counting observables. Remarkably, the third result extends even further, accommodating any function $f(\vb*{\phi})$, where $\vb*{\phi}$ represents a vector of arbitrary counting observables. All the relations universally hold for arbitrary finite times.}
\end{table*}}

The trade-off between precision and thermodynamic costs has recently been explored through the lens of uncertainty relations, specifically the thermodynamic uncertainty relation (TUR) and the kinetic uncertainty relation (KUR). These relations assert that achieving high precision always incurs a cost. The TUR establishes that the precision of any time-integrated current cannot be enhanced without increasing entropy production \cite{Barato.2015.PRL,Gingrich.2016.PRL,Horowitz.2017.PRE,Hasegawa.2019.PRL,Timpanaro.2019.PRL,Hasegawa.2019.PRE,Dechant.2020.PNAS,Vo.2020.PRE,Ray.2023.PRE,Horowitz.2020.NP}. Mathematically, it is expressed as an inequality between current fluctuations and dissipation,
\begin{equation}\label{eq:org.TUR}
F_\phi\coloneqq\tau\frac{\mvar[\phi]}{\ev{\phi}^2}\ge\frac{2}{\sigma},
\end{equation}
where $\ev{\phi}$ and $\mvar[\phi]$ are the mean and variance of the current $\phi$ over the operational time $\tau$, and $\sigma$ is the irreversible entropy production rate. The TUR has significant applications across various fields, including heat engines \cite{Pietzonka.2018.PRL}, molecular motors \cite{Pietzonka.2016.JSM}, anomalous diffusion \cite{Hartich.2021.PRL}, and dissipation estimation \cite{Li.2019.NC,Manikandan.2020.PRL,Vu.2020.PRE,Otsubo.2020.PRE}.
The KUR, on the other hand, establishes a trade-off between the precision of arbitrary counting observables and dynamical activity \cite{Garrahan.2017.PRE,Terlizzi.2019.JPA,Prech.2024.arxiv.CUR,Macieszczak.2024.arxiv}, explicitly given by $F_\phi \ge 1/a$, where $a$ quantifies the jump frequency. Notably, these relations can be unified into a tighter bound known as the thermo-kinetic uncertainty relation (TKUR) \cite{Vo.2022.JPA}.
The TUR generally holds for steady-state systems described by classical Markov jump processes and overdamped Langevin dynamics. It has been extended to arbitrary initial states and time-dependent driving \cite{Dechant.2018.JSM,Liu.2020.PRL,Vu.2020.PRR,Koyuk.2020.PRL}. However, violations of the TUR have been observed in other dynamics, including underdamped \cite{Vu.2019.PRE.UnderdampedTUR,Lee.2019.PRE,Pietzonka.2022.PRL} and quantum regimes \cite{Agarwalla.2018.PRB,Ptaszynski.2018.PRB,Liu.2019.PRE,Saryal.2019.PRE,Cangemi.2020.PRB,Friedman.2020.PRB,Kalaee.2021.PRE,Menczel.2021.JPA,Bret.2021.PRE,Sacchi.2021.PRE,Lu.2022.PRB,Gerry.2022.PRB,Das.2023.PRE,Manzano.2023.PRR,Singh.2023.PRA,Farina.2024.PRE,Palmqvist.2024.arxiv}. In particular, it has been shown that quantum coherence can significantly enhance precision and play a crucial role in violating the TUR. Although several generalizations of these uncertainty relations to quantum domains have been proposed \cite{Guarnieri.2019.PRR,Carollo.2019.PRL,Hasegawa.2020.PRL,Miller.2021.PRL.TUR,Vu.2022.PRL.TUR,Hasegawa.2023.NC,Prech.2025.PRL}, identifying conditions under which quantum coherence leads to TKUR violations remains elusive. This underscores the need for a novel quantum bound that clarifies the role of quantum coherence to address this gap.

As a complementary aspect of precision, the response of observables to small perturbations is of significant interest, as it provides a deeper characterization of physical systems. For systems near equilibrium, response theory has been well-established through the fluctuation-dissipation theorem (FDT) \cite{Kubo.1991}. Numerous generalizations of the FDT to far-from-equilibrium scenarios have been proposed \cite{Agarwal.1972,Baiesi.2009.PRL,Seifert.2010.EPL,Marconi.2008.PR}, with a primary focus on understanding the violation of the FDT \cite{Harada.2005.PRL}.
In recent years, the theory of static response has unveiled a close relationship between the response of observables and thermodynamic costs \cite{Owen.2020.PRX,Martins.2023.PRE,Gao.2024.EPL,Aslyamov.2024.PRL,Ptaszynski.2024.PRL,Zheng.2024.arxiv,Aslyamov.2024.arxiv,Liu.2024.arxiv}. Specifically, it has been shown that the response of observables to kinetic perturbations is bounded above by their dynamical fluctuations and thermodynamic quantities such as entropy production and dynamical activity. While these findings have been extensively developed for classical systems in nonequilibrium steady states, a comprehensive theory for open quantum systems is still lacking.

In this paper, we advance the understanding of quantum trajectory observables by deriving fundamental bounds for their precision and response, building upon the aforementioned two backgrounds. Focusing on Markovian open quantum dynamics with quantum jump and diffusion unravelings, we present three main results (see Table \ref{table:res.sum} for summary). First, we derive a quantum generalization of the TKUR [cf.~Eq.~\eqref{eq:main.result.1}], which establishes a lower bound on the relative fluctuations of currents in terms of entropy production and dynamical activity. This relation explicitly highlights the role of quantum coherence in enhancing current precision and violating classical uncertainty relations. Applying this bound to quantum heat engines, we reveal a trade-off between power, efficiency, and fluctuation, offering novel insights into the design of heat engines capable of achieving the Carnot efficiency at finite power without divergent fluctuations. Next, we derive an upper bound on the relative fluctuation of arbitrary counting observables [cf.~Eq.~\eqref{eq:main.result.2}], referred to as the quantum inverse uncertainty relation. This bound shows that the relative fluctuation of observables is constrained by their instantaneous fluctuation and the spectral gap of the symmetrized Liouvillian. When combined with the quantum TKUR, it yields a ``sandwich'' bound on the relative fluctuations of currents. Finally, we derive a quantum response kinetic uncertainty relation [cf.~Eq.~\eqref{eq:main.result.3}], which provides an upper bound on the response of general observables to kinetic perturbations in terms of their fluctuations and dynamical activity. Notably, this relation is quantitatively tighter than the KUR and recovers the KUR in the classical limit. Our findings are validated numerically using a three-level maser engine and a quantum many-body spin system.

\section{Setup}
We consider a $d$-dimensional open quantum system, whose dynamics is described by the Gorini-Kossakowski-Sudarshan-Lindblad (GKSL) master equation \cite{Lindblad.1976.CMP,Gorini.1976.JMP}:
\begin{align}\label{eq:Lindblad.dyn}
\dot{\varrho}_t&=\mca{L}(\varrho_t),\\
\mca{L}(\circ)&\coloneq-i\comm{H}{\circ}+\sum_{k\ge 1}\qty(L_k\circ L_k^\dagger-\acomm{L_k^\dagger L_k}{\circ}/2).\notag
\end{align}
Here, $\varrho_t$ denotes the system's density matrix at time $t$, and both Hamiltonian $H$ and jump operators $\{L_k\}$ are time-independent.
We assume that after a sufficiently long time, the system relaxes toward a unique stationary state $\pi$, which can be nonequilibrium.
Throughout this study, both the Planck constant and Boltzmann constant are set to unity, $\hbar=k_B=1$.

We mainly focus on two scenarios that can be described by the equation \eqref{eq:Lindblad.dyn}.
The first is thermodynamically dissipative dynamics, wherein the system is attached to single or multiple heat baths.
Assuming that the coupling between the system and the baths is weak and the environment is memoryless, the time evolution of the system's state is governed by the GKSL equation.
Each jump operator $L_k$ can, for example, characterize a jump between the energy eigenstates.
To guarantee the thermodynamic consistency for thermodynamically dissipative dynamics, we assume the local detailed balance condition \cite{Horowitz.2013.NJP}, which is fulfilled in most cases of physical interest \cite{Manzano.2018.PRX}.
That is, each jump operator $L_k$ can be associated with a reversed jump $L_{k^*}$ such that $L_k=e^{\Delta s_k/2}L_{k^*}^\dagger$.
Here, $\Delta s_k$ denotes the entropy change of the environment due to the jump.
Note that $k^*=k$ is possible (i.e., $L_k$ is a self-adjoint operator and $\Delta s_k=0$).
The second scenario is generic Markovian dynamics, where we do not impose any conditions on jump operators, and they can have arbitrary forms.
Examples include quantum measurement processes, wherein $L_k$ represents a measurement operator performed on the system \cite{Wiseman.2009}.

We remark on the validity and applicability of the GKSL dynamics. Its derivation from microscopic principles relies on the Born-Markov approximation, which assumes weak coupling between the system and the environment, as well as a fast relaxation of the environment \cite{Breuer.2002}. Consequently, it fails to describe non-Markovian dynamics, where memory effects and finite environmental relaxation times play a crucial role. This limitation restricts its applicability to strongly interacting systems, where system-environment correlations and entanglement become significant. Despite these constraints, the GKSL dynamics remains a fundamental framework in nonequilibrium physics due to its mathematical tractability and its ability to capture key aspects of dissipative processes, quantum decoherence, and steady-state behavior \cite{Breuer.2002,Manzano.2022.QS}. It continues to serve as an indispensable tool for studying open quantum systems in controlled experimental and theoretical settings.

\subsection{Quantum jump unraveling and observables}
The dynamics of Markovian open quantum systems can be unraveled into quantum jump trajectories \cite{Horowitz.2012.PRE,Horowitz.2013.NJP,Manzano.2015.PRE,Miller.2021.PRE}.
That is, the GKSL dynamics \eqref{eq:Lindblad.dyn} can be interpreted as a stochastic process of the pure state $\ket{\psi_t}$ such that $\varrho_t=\mbb{E}[\dyad{\psi_t}]$, where the average $\mbb{E}[\cdot]$ is taken over all stochastic trajectories.
The method of quantum jumps is a convenient way to describe the evolution of a quantum system that is constantly monitored, particularly when it is coupled to dissipative environments or is subject to continuous measurements.
For a small time step $dt\ll 1$, the master equation $\varrho_{t+dt}=(\mbb{1}+\mca{L}dt)\varrho_t$ can be expressed in the Kraus representation $\varrho_{t+dt}=\sum_{k\ge 0}M_k\varrho_tM_k^\dagger$ with the operators given by
\begin{equation}
M_0\coloneq \mbb{1}-iH_{\rm eff}dt,~M_k\coloneq L_k\sqrt{dt}~(k\ge 1).
\end{equation}
Here, $H_{\rm eff}\coloneqq H-(i/2)\sum_{k\ge 1}L_k^\dagger L_k$ is the effective Hamiltonian and $\mbb{1}$ denotes the identity operator.
The operator $M_0$ represents a smooth nonunitary evolution, whereas operators $\{M_k\}_{k\ge 1}$ induce probabilistic jumps in the system state.
With this interpretation, the GKSL equation \eqref{eq:Lindblad.dyn} can be unraveled into stochastic trajectories, wherein the pure state $\ket{\psi_t}$ either smoothly evolves or discontinuously jumps to another pure state at random times.
Specifically, the time evolution of the pure state $\ket{\psi_t}$ can be described by the stochastic Schr{\"o}dinger equation \cite{Breuer.2002},
\begin{align}
	d\ket{\psi_t}&=\qty(-iH_{\rm eff}+\frac{1}{2}\sum_{k\ge 1}\ev{L_k^\dagger L_k}_t)\ket{\psi_t}dt\notag\\
	&+\sum_{k\ge 1}\qty(\frac{L_k\ket{\psi_t}}{\sqrt{\ev{L_k^\dagger L_k}_t}}-\ket{\psi_t})dN_{k,t},
\end{align}
where $\ev{\circ}_t\coloneqq\mel{\psi_t}{\circ}{\psi_t}$ and $dN_{k,t}$ is a stochastic increment that satisfies $\mbb{E}[dN_{k,t}]=\ev{L_k^\dagger L_k}_tdt$ and takes value of $1$ if $k$th jump is detected and $0$ otherwise.

Each stochastic trajectory $\{\ket{\psi_t}\}$ of time duration $\tau$ is uniquely determined by the noise trajectory $\{dN_{k,t}\}$.
Therefore, we can define a time-integrated observable $\phi$ for individual trajectories as
\begin{equation}
	\phi\coloneqq\int_0^\tau\sum_{k\ge 1}dN_{k,t}c_k,
\end{equation}
where $\{c_k\}_{k\ge 1}$ are real counting coefficients.
By this definition, the observable $\phi$ increases by $c_k$ for each $k$th jump occurred. 
For thermodynamically dissipative dynamics, $\phi$ is called a \emph{current} whenever the counting variables are antisymmetric (i.e., $c_k=-c_{k^*}$ for all $k$).
Examples of relevant observables include particle current ($c_k=1$ for absorption and $c_k=-1$ for emission), jump activity ($c_k=1$ for all jumps), and heat flux ($c_k=q_k$ where $q_k$ is heat dissipated from the system to the environment due to $k$th jump).
The average of the counting observable can be analytically calculated as
\begin{equation}
	\ev{\phi}=\tau\sum_{k\ge 1}c_k\tr(L_k\pi L_k^\dagger).
\end{equation}
The higher-order moments of observable $\phi$ can be computed using the method of full counting statistics.
Defining the generating function $G_\tau(u)\coloneqq\tr e^{\mca{L}_u\tau}(\pi)$, the $n$th moment can be calculated as
\begin{equation}\label{eq:C.nth.moments}
	\ev{\phi^n}=(-i\partial_u)^nG_\tau(u)\big|_{u=0},
\end{equation}
where the tilted super-operator $\mca{L}_u$ is given by \cite{Landi.2024.PRXQ}
\begin{equation}
	\mca{L}_u(\varrho)\coloneqq -i[H,\varrho]+\sum_{k\ge 1}\qty(e^{iuc_k}L_k\varrho L_k^\dagger-\{L_k^\dagger L_k,\varrho\}/2).
\end{equation}
It is worth noting that the stochastic unraveling of the GKSL dynamics is not unique, as it depends on the choice of measurement on the environment \cite{Wiseman.2009}. While our primary focus in this study is on quantum jump unraveling, we also demonstrate that similar results can be obtained for quantum diffusion unraveling using the same approach.

\subsection{Entropy production and dynamical activity}
We introduce two relevant quantities for thermodynamically dissipative dynamics and generic dynamics.
The first is irreversible entropy production, which quantifies the degree of time-reversal symmetry breaking \cite{Landi.2021.RMP}.
According to the framework of quantum thermodynamics, entropy production is generally defined as the sum of the entropic changes in the system and the environment \cite{Esposito.2010.NJP}, and its rate is given by
\begin{equation}
	\sigma\coloneqq\sigma_{\rm sys}+\sigma_{\rm env}.
\end{equation}
Here, $\sigma_{\rm sys}$ and $\sigma_{\rm env}$ denote the entropic rate contributed by the system and the environment, respectively. In the generic setup where the system and the environment undergo a unitary evolution, the resulting entropy production can be expressed in terms of a correlation between the system and the environment \cite{Esposito.2010.NJP,Reeb.2014.NJP}. For the weak coupling regime considered here, they are explicitly quantified as \cite{Horowitz.2013.NJP}
\begin{align}
	\sigma_{\rm sys}&\coloneqq-\tr(\dot\varrho_t\ln\varrho_t),\\
	\sigma_{\rm env}&\coloneqq\sum_{k\ge 1}\tr(L_k\varrho_tL_k^\dagger)\Delta s_k.
\end{align}
In the stationary state $\pi$ [i.e., $\mca{L}(\pi)=\mbb{0}$], the entropic contribution from the system vanishes (i.e., $\sigma_{\rm sys}=0$).
Thus, the irreversible entropy production rate is equal to the entropic rate produced by quantum jumps due to the interactions with the environment,
\begin{equation}
	\sigma=\sum_{k\ge 1}\tr(L_k\pi L_k^\dagger)\Delta s_k.
\end{equation}
We can prove that $\sigma\ge 0$, which is nothing but the second law of thermodynamics.
Another crucial quantity is dynamical activity \cite{Maes.2020.PR}, which quantifies the frequency of quantum jumps that occurred in the system and can be explicitly calculated as 
\begin{equation}
	a=\sum_{k\ge 1}\tr(L_k\pi L_k^\dagger).
\end{equation}
Physically, the dynamical activity characterizes the strength of the system's thermalization.
These two quantities ($\sigma$ and $a$) play essential roles in constraining the precision of observables in fluctuating dynamics and other nonequilibrium aspects such as the thermodynamic speed limit \cite{Aurell.2012.JSP,Shiraishi.2018.PRL,Vu.2021.PRL,Vu.2023.PRX}.

\subsection{Symmetrized Liouvillian gap}
Next, we describe another relevant quantity called the symmetrized Liouvillian gap, which plays a crucial role in constraining the fluctuation of observables.
To this end, we introduce the following inner products:
\begin{align}
	\ev{A,B}&\coloneqq\tr(A^\dagger B),\\
	\ev{A,B}_s&\coloneqq\tr(A^\dagger\pi^{s}B\pi^{1-s})
\end{align}
for $s\in[0,1]$.
We also define the norms $\|A\|^2\coloneqq\ev{A,A}$ and $\|A\|_s^2\coloneqq\ev{A,A}_s$ for the notational convenience.
We consider the adjoint time evolution of an operator $A_t$ in the Heisenberg picture, $\dot A_t=\widetilde{\mca{L}}(A_t)$, where the super-operator $\widetilde{\mca{L}}$ is given by
\begin{equation}
\widetilde{\mca{L}}(\circ)\coloneqq i[H,\circ]+\sum_{k\ge 1}\qty(L_k^\dagger\circ L_k-\{L_k^\dagger L_k,\circ\}/2).
\end{equation}
This adjoint super-operator has the following property for any operators $A$ and $B$:
\begin{equation}
	\ev{A,\mca{L}(B)}=\ev{\widetilde{\mca{L}}(A),B}.
\end{equation}
Let $\widetilde{\mca{L}}^*$ be the adjoint super-operator of $\widetilde{\mca{L}}$ with respect to the inner product $\ev{\cdot,\cdot}_s$.
Specifically, $\widetilde{\mca{L}}^*$ is defined such that the following equality is fulfilled for arbitrary operators $A$ and $B$:
\begin{equation}
	\ev{A,\widetilde{\mca{L}}(B)}_s=\ev{\widetilde{\mca{L}}^*(A),B}_s.
\end{equation}
The super-operator $\widetilde{\mca{L}}^*$ can be explicitly expressed as $\widetilde{\mca{L}}^*=\pi^{-s}\mca{L}(\pi^{s}\circ\pi^{1-s})\pi^{s-1}$.
The symmetrized Liouvillian is then defined as \cite{Mori.2023.PRL}
\begin{equation}
	\widetilde{\mca{L}}_s\coloneqq\frac{\widetilde{\mca{L}}+\widetilde{\mca{L}}^*}{2},
\end{equation}
which is self-adjoint with respect to the inner production $\ev{\cdot,\cdot}_s$,
\begin{equation}
	\ev{A,\widetilde{\mca{L}}_s(B)}_s=\ev{\widetilde{\mca{L}}_s(A),B}_s.
\end{equation}
In the literature, $s=0$ and $s=1/2$ are the most studied cases.
Throughout this study, we only consider $s=0$ and $s=1/2$, where $\widetilde{\mca{L}}_s$ possesses some relevant properties.
First, since $\widetilde{\mca{L}}(\mbb{1})=\widetilde{\mca{L}}^*(\mbb{1})=\mbb{0}$, it is evident that $\widetilde{\mca{L}}_s(\mbb{1})=\mbb{0}$.
This means that $\widetilde{\mca{L}}_s$ has a zero eigenvalue corresponding to the eigenvector $\mbb{1}$.
Second, $\widetilde{\mca{L}}_s$ is negative semi-definite with respect to the inner product $\ev{\cdot,\cdot}_s$.
For the $s=0$ case, this was proved in Ref.~\cite{Mori.2023.PRL}.
In Lemma \ref{lem:sym.Lio} (Appendix \ref{app:sL.proof}), we prove the $s=1/2$ case.
Let $\{\lambda_n\}$ be the eigenvalues of $\widetilde{\mca{L}}_s$, which are sorted in the descending order as
\begin{equation}
	0=\lambda_0>\lambda_1\ge\lambda_2\ge\dots\ge\lambda_{d^2-1}.
\end{equation}
The symmetrized Liouvillian gap is defined using the second largest eigenvalue,
\begin{equation}
	g_s\coloneqq-\lambda_1>0.
\end{equation}
The gap $g_s$ quantifies the slowest decay mode of the generator.
As shown later, this gap is essential in constraining the relative fluctuation of observables.

\section{Main results}

\subsection{Quantum thermo-kinetic uncertainty relation}\label{sec:qtkur}
We consider thermodynamically dissipative dynamics, wherein the local detailed balance condition is assumed.
Our first main result is the quantum TKUR, which establishes a lower bound on the relative fluctuation of an arbitrary {\it current} $\phi$.
Applying the quantum Cram{\'e}r-Rao inequality, we prove that the relative fluctuation of current $\phi$ is always lower bounded by entropy production and dynamical activity as 
\begin{equation}\label{eq:main.result.1}
	\frac{F_\phi}{(1+\delta_\phi)^2}\ge\frac{4a}{\sigma^2}\Phi\qty(\frac{\sigma}{2a})^2\ge\max\qty(\frac{2}{\sigma},\frac{1}{a}),
\end{equation}
where $\delta_\phi$ is a contribution to the current average from quantum coherent dynamics [cf.~Eq.~\eqref{eq:cur.avg.qcor}] and $\Phi(x)$ denotes the inverse function of $x\tanh(x)$.
The relation universally holds for arbitrary times, and its second inequality is a consequence of $\Phi(x)\ge\max(\sqrt{x},x)$.
Using the vectorization of operators, the asymptotic long-time value of $\delta_\phi$ can be explicitly given by
\begin{equation}
	\delta_\phi=-\frac{\bvec{\mbb{1}}\widehat{\mca{C}}\widehat{\mca{L}}^+\widehat{\mca{D}}_\ell\kvec{\pi}}{\bvec{\mbb{1}}\widehat{\mca{C}}\kvec{\pi}},
\end{equation}
where $\kvec{A}\coloneqq\sum_{m,n}a_{mn}\ket{m}\otimes\ket{n}$ for $A=\sum_{m,n}a_{mn}\dyad{m}{n}$, $\widehat{\mca{L}}^+$ denotes the Moore-Penrose pseudo-inverse of the vectorized super-operator $\widehat{\mca{L}}$ [cf.~Eq.~\eqref{eq:pseudo.inv}] and
\begin{align}
	\widehat{\mca{D}}_\ell&\coloneqq\sum_{k\ge 1}\ell_k\qty[ L_k\otimes L_k^* - \frac{1}{2} ( L_k^\dagger L_k )\otimes\mbb{1} - \frac{1}{2} \mbb{1}\otimes( L_k^\dagger L_k )^\top],\notag\\
	\widehat{\mca{C}}&\coloneqq\sum_{k\ge 1}c_kL_k\otimes L_k^*.
\end{align}
The real coefficients $\{\ell_k\}$ are given in Eq.~\eqref{eq:pert.dyn}.
The detailed derivation of the result is presented in Appendix \ref{app:proof.qtkur}.
We also demonstrate that a similar relation can be obtained for quantum diffusion unraveling, differing only in the quantum contribution $\delta_\phi$ (see Appendix \ref{app:qTKUR.qdu}).

Remarks on this result are given in order.
(i) First, in the classical limit, we can show that $\delta_\phi=0$ [cf.~Eq.~\eqref{eq:deltaJ.cl}] and the result \eqref{eq:main.result.1} reduces to
\begin{equation}
	F_\phi\ge\frac{4a}{\sigma^2}\Phi\qty(\frac{\sigma}{2a})^2\ge\max\qty(\frac{2}{\sigma},\frac{1}{a}),
\end{equation}
which recovers the TKUR obtained in Ref.~\cite{Vo.2022.JPA} and the conventional TUR \cite{Barato.2015.PRL,Gingrich.2016.PRL} and KUR \cite{Garrahan.2017.PRE,Terlizzi.2019.JPA} for classical Markov jump processes. Therefore, it can be regarded as the quantum TKUR. (ii) Second, inequality \eqref{eq:main.result.1} clearly illustrates the role of quantum coherent dynamics in enhancing the precision of currents. That is, the conventional TKUR is possibly violated if $-2<\delta_\phi<0$. Otherwise, the TKUR is valid whenever $|1+\delta_\phi|\ge 1$. (iii) Third, the bound can be further tightened by considering multiple currents \cite{Moreira.2024.arxiv} and leveraging their correlations, as has been done in the classical case \cite{Dechant.2019.JPA,Vu.2019.PRE.UnderdampedTUR}. (iv) Last, we compare our result with an extant uncertainty relation derived in Ref.~\cite{Vu.2022.PRL.TUR}, which reads
\begin{equation}\label{eq:ext.qtur}
	\frac{F_\phi}{(1+\widetilde{\delta}_\phi)^2}\ge\frac{2}{\sigma+\upsilon},
\end{equation}
where $\upsilon$ denotes a quantum contribution from the quantum coherent dynamics and $\widetilde{\delta}_\phi$ is another correction to the current average.
The quantity $\upsilon$ can be either negative or positive, and there is no definite hierarchical relationship between Eqs.~\eqref{eq:main.result.1} and \eqref{eq:ext.qtur}.
While relation \eqref{eq:ext.qtur} clarifies the role of quantum coherent dynamics in constraining current fluctuations, the separate contributions $\upsilon$ and $\widetilde{\delta}_\phi$ make it intractable to characterize the violation of the TUR in general cases.
In contrast, the new relation \eqref{eq:main.result.1} allows for examining the violation of the TUR through the value of $\delta_\phi$.
Furthermore, by consolidating all quantum contributions into the term $\delta_\phi$, it can derive novel trade-off relations in other contexts as corollaries, as demonstrated below.

\subsubsection*{Power-efficiency trade-off relation for quantum heat engines}
We demonstrate that the quantum TKUR \eqref{eq:main.result.1} can derive a trade-off relation between power and efficiency for quantum steady-state heat engines.
Consider a heat engine simultaneously coupled to two heat baths, one hot at temperature $T_h$ and one cold at temperature $T_c~(<T_h)$.
Let $\phi_h$ be the heat current supplied from the hot heat bath, $\phi_c$ be the heat current absorbed by the cold heat bath.
From the first law of thermodynamics, the power current can be expressed in terms of these heat currents as $P=(\phi_h-\phi_c)/\tau$.
Applying relation \eqref{eq:main.result.1}, we can derive the following trade-off relation between power and efficiency:
\begin{equation}
	P\frac{\eta}{\eta_C-\eta}\frac{T_c(1+\delta_P)^2}{\Delta_P}\le\frac{1}{2}.\label{eq:qua.pe.tradeoff}
\end{equation}
Here, $\eta_C\coloneqq 1-T_c/T_h$ is the Carnot efficiency and $\Delta_P\coloneqq\lim_{\tau\to+\infty}\tau\mvar[P]$.
Notably, unlike the classical trade-off relation between power and efficiency \cite{Pietzonka.2018.PRL}, there exists a quantum contribution $\delta_P$ in inequality \eqref{eq:qua.pe.tradeoff}, which vanishes in the classical limit.
Relation \eqref{eq:qua.pe.tradeoff} implies that achieving the Carnot efficiency at finite power without divergent fluctuation of power necessitates $|1+\delta_P|\ll 1$.
More specifically, $|1+\delta_P|$ should vanish at the same order of $\sqrt{\eta_C-\eta}$ as $\eta\to\eta_C$.
This provides insights into the design of efficient heat engines operating at the boundary of the fundamental limitations.

\subsection{Quantum inverse uncertainty relation}\label{sec:qiur}
Next, we consider generic dynamics \eqref{eq:Lindblad.dyn} without the assumption of local detailed balance. 
We deal with an {\it arbitrary} counting observable $\phi$, including currents studied in the previous subsection \ref{sec:qtkur}.
Our second main result is the quantum inverse uncertainty relation, which sets an upper bound on the relative fluctuation of observable $\phi$, expressed as (see Appendix \ref{app:proof.qiur} for the proof)
\begin{equation}\label{eq:main.result.2}
	F_\phi\le\frac{\ev{J_2,\pi}}{\ev{J_1,\pi}^2}\qty(1+\frac{2\kappa}{g_s}).
\end{equation}
Here, $J_n\coloneqq\sum_{k\ge 1}c_k^nL_k^\dagger L_k$ ($n=1,2$) are self-adjoint operators defined in terms of jump operators and counting coefficients, and $\kappa$ is given by
\begin{align}
	\kappa&\coloneqq\frac{\|J_1-\ev{J_1,\pi}\mbb{1}\|_s\|\pi^{-s}J_\pi\pi^{s-1}-\ev{J_1,\pi}\mbb{1}\|_s}{\ev{J_2,\pi}},\\
	J_\pi&\coloneqq\sum_{k\ge 1}c_kL_k\pi L_k^\dagger.
\end{align}
Physically, $J_1$ can be identified as the observable operator as its expected value with respect to the stationary state $\pi$ yields the observable average; that is, $\ev{J_1,\pi}=\tau^{-1}\ev{\phi}$. On the other hand, $J_2$ can be regarded as the instantaneous-fluctuation operator as $\ev{J_2,\pi}=\lim_{\tau\to 0}\tau^{-1}\mvar[\phi]$. These immediately derive
\begin{equation}
	\frac{\ev{J_2,\pi}}{\ev{J_1,\pi}^2}=\lim_{\tau\to 0}\frac{\tau\mvar[\phi]}{\ev{\phi}^2}.
\end{equation}
Thus, ${\ev{J_2,\pi}}/{\ev{J_1,\pi}^2}$ can be interpreted as the instantaneous relative fluctuation of the observable.
Additionally, it can be observed that $\kappa$ vanishes when either $J_1\propto\mds{1}$ or $J_\pi\propto\pi$, indicating that $\kappa$ quantifies the deviation of the observable operator from the identity operator.
Inequality \eqref{eq:main.result.2}, valid for $s\in\{0,1/2\}$ and for arbitrary times, reveals that the precision of observables is fundamentally constrained by the symmetrized spectral gap and the instantaneous relative fluctuation.
In general, there is no strict hierarchical relationship between the bounds for $s=0$ and $s=1/2$; however, they converge and become identical in the classical limit. A similar bound for quantum diffusion unraveling is presented in Appendix \ref{app:qiur.qdu}.
Combined with the quantum TKUR \eqref{eq:main.result.1}, this bound offers a comprehensive picture for understanding the precision of currents in finite-time processes.
Furthermore, this result extends the classical findings of Ref.~\cite{Smith.2023.PRL} for Markov jump processes, which relied on concentration techniques, into the quantum domain. 
While some quantum bounds on the moment generating function utilizing the $s=1/2$ spectral gap have been established for simple counting processes \cite{Girotti.2023.AHP} and diffusive processes \cite{Benoist.2022.Q}, our approach takes a different route by directly bounding the second moment and accounting for the spectral gap in both $s=0$ and $s=1/2$ cases. 
This provides a more comprehensive and versatile framework for exploring the precision constraints in quantum systems.

\subsection{Quantum response kinetic uncertainty relation}\label{sec:qrkur}
Last, we investigate the static response of observables to kinetic perturbations.
In contrast to the previous subsections, we focus on observables expressed in a general form $f(\vb*{\phi})$, where $f$ is an arbitrary function and $\vb*{\phi}=[\phi_1,\dots,\phi_N]^\top$ is a vector of arbitrary counting observables.
Notably, when $f(x)=x$ and $N=1$, this observable reduces to the conventional counting observable.
We consider the case where the Hamiltonian and the jump operators are parameterized by a control variable $\epsilon$.
That is, the dependence of each jump operator on $\epsilon$ is explicitly given by
\begin{equation}
	L_k=e^{\omega_k(\epsilon)/2}V_k,
\end{equation}
where $\omega_k(\epsilon)$ is a function of $\epsilon$ and $V_k$ is independent of $\epsilon$.
In contrast, the Hamiltonian can depend on $\epsilon$ in an arbitrary manner; that is, $H$ can take any form $H(\epsilon)$.
Examples of $\epsilon$ include measurement amplitude, energy-level spacing, reservoir temperature, and the coupling strength between the system and the reservoir.

As the third main result, we obtain the following response kinetic uncertainty relation for arbitrary observables $f(\vb*{\phi})$ (see Appendix \ref{app:proof.qrkur} for the proof):
\begin{equation}\label{eq:main.result.3}
	\frac{\|\nabla\ev{f(\vb*{\phi})}\|_1^2}{\mvar[f(\vb*{\phi})]}\le\tau a.
\end{equation}
Here, $\nabla\ev{f(\vb*{\phi})}\coloneqq[d_{\omega_k}\ev{f(\vb*{\phi})}]_k^\top$ is the vector of the static responses of the observable to each jump perturbation $\omega_k\to\omega_k+\delta\omega_k$ and $\|\vb*{x}\|_1\coloneqq\sum_n|x_n|$ denotes the $1$-norm.
More precisely, the total derivative symbol $d$ is defined as $d_w\ev{\circ}\coloneqq\lim_{\delta w\to 0}[\ev{\circ}_{w+\delta w}-\ev{\circ}_w]/\delta w$, where the initial state under perturbations remains fixed at $\pi$.
Relation \eqref{eq:main.result.3}, which holds for arbitrary times, indicates that the precision of observable response to small perturbations is always bounded by dynamical activity.
While our focus here is on the stationary state, this result can also be extended to the transient regime, retaining the same structure (see Appendix \ref{app:res.kur.tran.gen} for the detailed derivation). 
Specifically, we can prove that the following relation holds for any operational time and arbitrary initial quantum states:
\begin{equation}
	\frac{\|\nabla\ev{f(\vb*{\phi})}\|_1^2}{\mvar[f(\vb*{\phi})]}\le\mca{A}_\tau,
\end{equation}
where $\mca{A}_\tau\coloneqq\int_0^\tau\dd{t}\sum_{k\ge 1}\tr(L_k\varrho_tL_k^\dagger)$ is the dynamical activity, quantifying the average number of quantum jumps over the duration $\tau$.
This generalization demonstrates that the trade-off persists across both stationary and transient processes, thereby significantly broadening the applicability of the derived bounds to encompass a wider range of nonequilibrium quantum dynamics.
Furthermore, the bound also holds for quantum diffusion unraveling, as its derivation is independent of the unraveling method.
In the following, we demonstrate its twofold applications.

\begin{figure*}[t]
\centering
\includegraphics[width=1\linewidth]{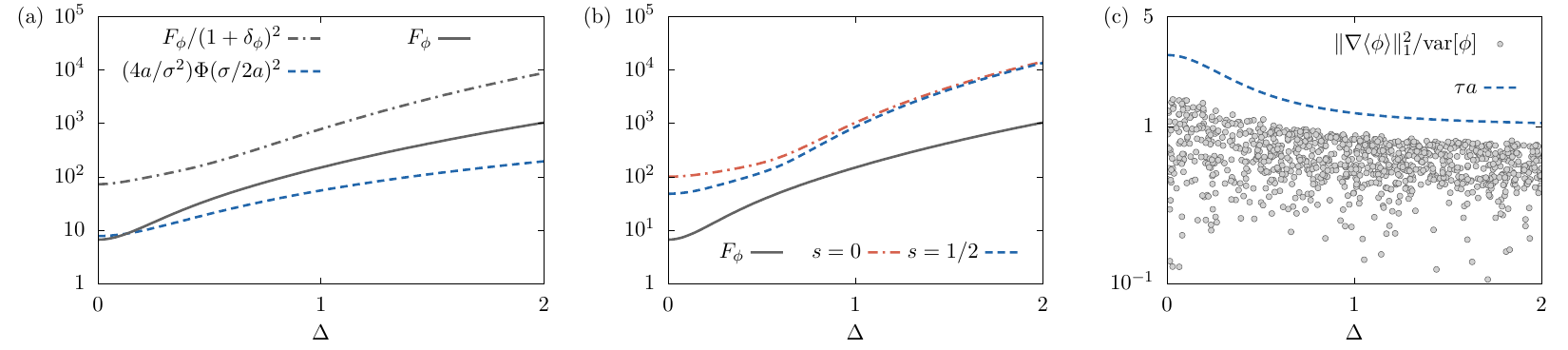}
\protect\caption{Numerical illustration of the main results in the three-level maser. (a) The validity of the quantum TKUR \eqref{eq:main.result.1} and the violation of the classical version. The solid and dashed-dotted lines depict $F_\phi$ and $F_\phi/(1+\delta_\phi)^2$, respectively, whereas the dashed line represents the thermo-kinetic lower bound $(4a/\sigma^2)\Phi(\sigma/2a)^2$. 
(b) Demonstration of the quantum inverse uncertainty relation \eqref{eq:main.result.2}. The upper bounds computed with $s=0$ and $s=1/2$ are plotted using the dashed-dotted and dashed lines, respectively.
(c) Demonstration of the quantum response kinetic uncertainty relation \eqref{eq:main.result.3}. The circles and the dashed line depict the response precision $\|\nabla\ev{\phi}\|_1^2/\mvar[\phi]$ and the upper bound of dynamical activity. $\Delta$ is varied in the range $[0,2]$, while other parameters are fixed as $\gamma_h=0.1$, $\gamma_c=2$, $n_h=5$, $n_c=0.02$, $\Omega=0.15$, and $\tau=10$.}\label{fig:Maser}
\end{figure*}

First, it can derive an upper bound on the observable response to a small change in the parameter $\epsilon$.
To this end, note that $|d_{\omega_k}\ev{f(\vb*{\phi})}|\ge|d_\epsilon\omega_k(\epsilon)d_{\omega_k}\ev{f(\vb*{\phi})}|/\omega_{\rm max}$, where $\omega_{\rm max}\coloneqq\max_k|d_\epsilon\omega_k(\epsilon)|$.
Using this fact and considering the case where the Hamiltonian is independent of $\epsilon$, we obtain
\begin{align}
	\|\nabla\ev{f(\vb*{\phi})}\|_1&\ge\frac{1}{\omega_{\rm max}}|\sum_{k\ge 1}d_\epsilon\omega_k(\epsilon)d_{\omega_k}\ev{f(\vb*{\phi})}|\notag\\
	&=\frac{|d_\epsilon\ev{f(\vb*{\phi})}|}{\omega_{\rm max}}.
\end{align}
Consequently, we can show that the response of observables to the parameter perturbation is bounded by dynamical activity and its fluctuation,
\begin{equation}\label{eq:res.per.kur}
	[d_\epsilon\ev{f(\vb*{\phi})}]^2\le \tau \omega_{\rm max}^2\mvar[f(\vb*{\phi})]a.
\end{equation}
Inequality \eqref{eq:res.per.kur} implies that achieving a large response of an observable requires either significant fluctuations or high dynamical activity in the system.
Additionally, this result provides a quantum generalization of the classical response uncertainty relation reported in Refs.~\cite{Aslyamov.2024.arxiv,Liu.2024.arxiv} for Markov jump processes, offering a broader framework that goes beyond counting observables.

Second, relation \eqref{eq:main.result.3} can recover the KUR in the classical limit.
This can be verified from the fact that $\|\nabla\ev{\phi}\|_1\ge|\ev{\phi}|$ (see Appendix \ref{app:ckur.recover} for the proof), which immediately yields the KUR \cite{Garrahan.2017.PRE,Terlizzi.2019.JPA},
\begin{equation}
	\frac{\ev{\phi}^2}{\mvar[\phi]}\le\tau a.
\end{equation}
In other words, relation \eqref{eq:main.result.3} is quantitatively tighter than the KUR.

\section{Numerical demonstration}
\subsection{Three-level maser engine}
We exemplify the main results in a three-level maser engine \cite{Scovil.1959.PRL}, which is resonantly modulated by an external electric field and simultaneously coupled to a hot and a cold heat bath.
The maser can operate as a heat engine or refrigerator depending on the parameter regime.
The Hamiltonian and jump operators are explicitly given by
\begin{align}
H_t&=H_0+V_t,\\
L_1&= \sqrt{\gamma_hn_h}\sigma_{31},~L_{1^*}=\sqrt{\gamma_h(n_h+1)}\sigma_{13},\\
L_2&=\sqrt{\gamma_cn_c}\sigma_{32},~L_{2^*}=\sqrt{\gamma_c(n_c+1)}\sigma_{23}.
\end{align}
Here, $H_0=\omega_1\sigma_{11}+\omega_2\sigma_{22}+\omega_3\sigma_{33}$ is the bare Hamiltonian, $V_t=\Omega\qty( e^{i\omega_0t}\sigma_{12}+ e^{-i\omega_0t}\sigma_{21})$ is the external classical field, and $\sigma_{ij}=\dyad{i}{j}$ for some basis $\{\ket{i}\}$.
To remove the time dependence of the full Hamiltonian, we consider operators in an appropriate rotating frame $X\to\tilde{X}= U_t^\dagger XU_t$ \cite{Boukobza.2007.PRL}, where $U_t=e^{-i\bar{H}t}$ and $\bar{H}=\omega_1\sigma_{11}+(\omega_1+\omega_0)\sigma_{22}+\omega_3\sigma_{33}$.
In this rotating frame, the GKSL master equation reads \cite{fnt1}
\begin{equation}\label{eq:rot.frame.Ham}
\dot{\tilde{\varrho}}_t = -i[ H,\tilde{\varrho}_t ] +\sum_{k\ge 1}\qty(L_k\tilde{\varrho}_tL_k^\dagger-\acomm{L_k^\dagger L_k}{\tilde{\varrho}_t}/2),
\end{equation}
where $H=-\Delta \sigma_{22}+\Omega( \sigma_{12}+\sigma_{21} )$ and $\Delta=\omega_0+\omega_1-\omega_2$.

We consider a current $\phi$ with $\vb*{c}=[1,-1,-1,1]^\top$, which is proportional to the net number of cycles.
To examine the validity of the main results, we vary only $\Delta$ while fixing other parameters.
The relative fluctuation of current $\phi$ is numerically calculated using the method of full counting statistics.
All the numerical results are plotted in Fig.~\ref{fig:Maser}.
As shown in Fig.~\ref{fig:Maser}(a), the relative fluctuation $F_\phi$ increases with increasing $\Delta$.
Notably, the quantum TKUR \eqref{eq:main.result.1} holds universally, as the line representing $F_\phi/(1+\delta_\phi)^2$ always lies strictly above the lower bound $(4a/\sigma^2)\Phi(\sigma/2a)^2$.
In contrast, when the quantum correction $\delta_\phi$ is excluded, the relative fluctuation $F_\phi$ falls below the lower bound for $\Delta\ll\Omega$, signaling a violation of the classical TKUR.

The quantum inverse uncertainty relation \eqref{eq:main.result.2} is numerically verified in Fig.~\ref{fig:Maser}(b) for the same current $\phi$.
The upper bound is calculated for both the $s=0$ and $s=1/2$ cases.
As shown, the relative fluctuation is always constrained by the upper bound.
Notably, the bound for $s=1/2$ is tighter than that for $s=0$ in this scenario.

Finally, we validate the quantum response kinetic uncertainty relation \eqref{eq:main.result.3}. To calculate the response precision, we sample the observable $\phi$, with each element of $\vb*{c}$ randomly chosen from the range $[-1,1]$. As illustrated in Fig.~\ref{fig:Maser}(c), while the dynamical activity decreases, it consistently acts as an upper bound for the response precision of all sampled observables.

\begin{figure*}[t]
\centering
\includegraphics[width=1\linewidth]{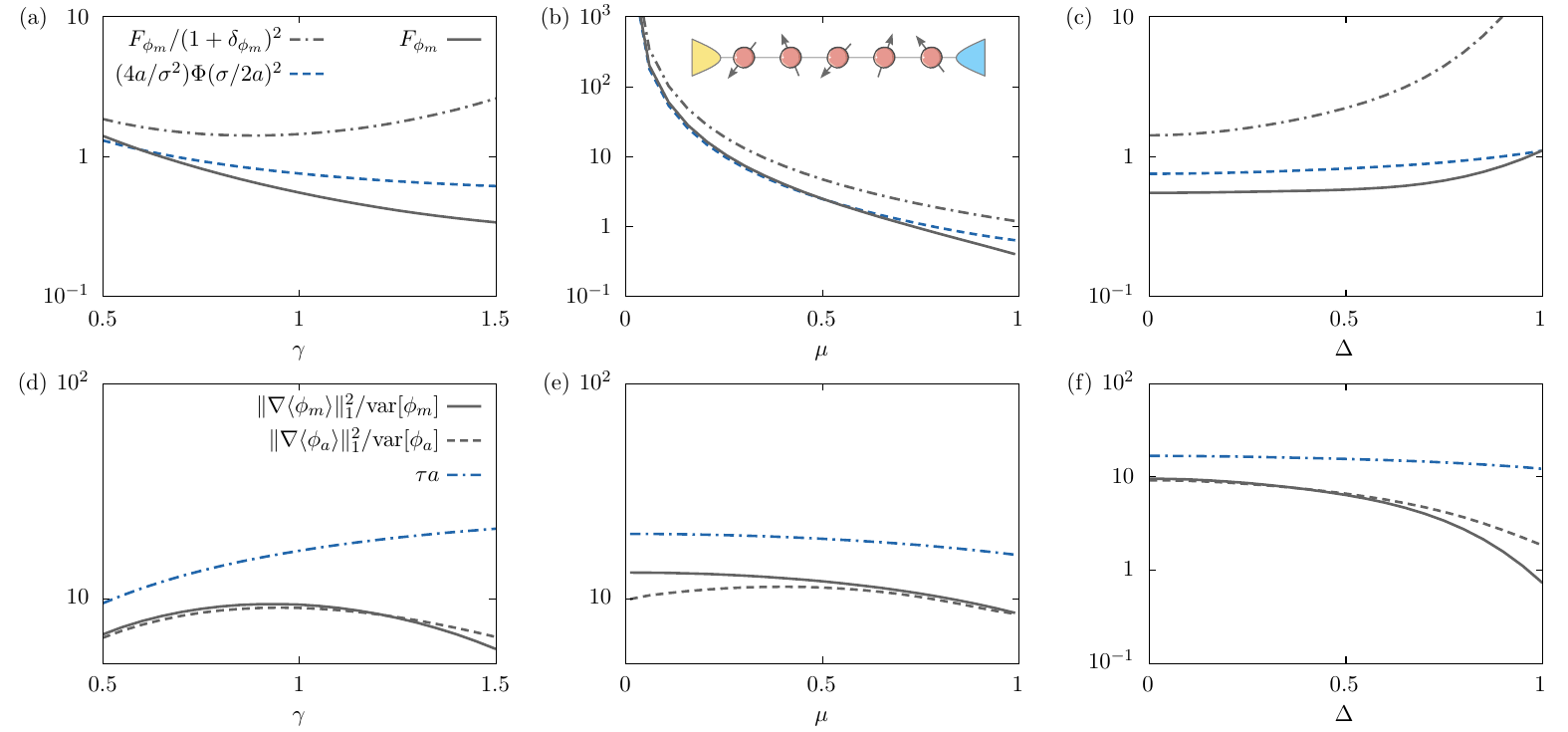}
\protect\caption{Numerical illustration of the main results in the XXZ spin chain with $L=5$ sites. (Top panel) The quantum TKUR, represented by $F_{\phi_m}/(1+\delta_{\phi_m})^2$, and the conventional TKUR, $F_{\phi_m}$, are shown as the dashed-dotted and solid lines, respectively, while the lower bound $(4a/\sigma^2)\Phi(\sigma/2a)^2$ is depicted by the dashed line. 
(Bottom panel) The response precision of current $\phi_m$ and observable $\phi_a$ are represented by the solid and dashed lines, respectively, while the upper bound $\tau a$ is depicted by the dashed-dotted line.
In each corresponding figure, the fixed parameters are $\gamma=1$, $\mu=0.9$, $\Delta=0.1$, and $\tau=10$.}\label{fig:SpinChain}
\end{figure*}

\subsection{Boundary-driven XXZ spin chain}
Next, we illustrate our results in a quantum many-body spin system, specifically the spin-$1/2$ XXZ chain on one-dimensional lattices.
The boundary-driven XXZ spin chain is a paradigmatic model for studying nonequilibrium quantum dynamics, offering profound insights into transport phenomena, steady-state properties, and the role of integrability in dissipative systems \cite{Landi.2022.RMP}.
By coupling the spin chain to external reservoirs at its boundaries, this model provides a controlled platform to investigate how energy, spin, or particle currents emerge and evolve under nonequilibrium conditions.

We consider a spin chain consisting of $L$ sites with the Hamiltonian given by 
\begin{equation}
	H_{\rm XXZ}=\sum_{n=1}^{L-1}(\sigma_n^x\sigma_{n+1}^x+\sigma_n^y\sigma_{n+1}^y+\Delta\sigma_n^z\sigma_{n+1}^z),
\end{equation}
where $\sigma_n^a$ $(a\in\{x,y,z\})$ are the standard Pauli matrices at site $n$ and $\Delta$ is the anisotropy parameter.
The spin chain is coupled to reservoirs at only the first and last sites, which drive the system out of equilibrium.
The dissipative operators are specified as
\begin{align}
	L_{1}&=\sqrt{\gamma(1+\mu)}\sigma_1^+,~L_{1^*}=\sqrt{\gamma(1-\mu)}\sigma_1^-,\\
	L_{2}&=\sqrt{\gamma(1-\mu)}\sigma_L^+,~L_{2^*}=\sqrt{\gamma(1+\mu)}\sigma_L^-,
\end{align}
where $\sigma_n^\pm=(\sigma_n^x\pm i\sigma_n^y)/2$, $\gamma$ describes the coupling strength to reservoirs, and $\mu\in[0,1]$ reflects the strength of incoherent driving at the boundaries.
It has been shown that the system possesses a unique nonequilibrium stationary state \cite{Prosen.2012.PS}.
Under symmetric driving conditions (i.e., $\mu=0$), the steady state corresponds to a thermal state at infinite temperature, $\pi\propto\mbb{1}$.
In contrast, for asymmetric driving (i.e., $\mu > 0$), a persistent nonzero current emerges in the steady state.
This setup allows us to analyze transport properties and serves as an ideal platform to verify the derived trade-off relations for precision, response, and thermodynamic costs in quantum systems.

We consider two observables: the magnetization current $\phi_m$ and the activity observable $\phi_a$.
Since the Hamiltonian $H_{\rm XXZ}$ conserves the total magnetization $M=\sum_{n=1}^L\sigma_n^z$ (i.e., $[H_{\rm XXZ},M]=\mbb{0}$), the time evolution of the total magnetization $\ev{M}_t\coloneqq\tr(M\varrho_t)$ can be described as
\begin{equation}
	\frac{d\ev{M}_t}{dt}=\iota_1(t)+\iota_L(t),
\end{equation}
where $\iota_{1/L}(t)$ denote the magnetization fluxes injected from the reservoirs at the boundaries.
At the stationary state, the fluxes satisfy $\iota_1^{\rm ss}+\iota_L^{\rm ss}=0$, and $\iota_1^{\rm ss}$ can be explicitly expressed as
\begin{equation}
	\iota_1^{\rm ss}=2\qty[\tr(L_1\pi L_1^\dagger)-\tr(L_{1^*}\pi L_{1^*}^\dagger)].
\end{equation}
Therefore, the stochastic magnetization current $\phi_m$ can be specified using the vector of counting coefficients $\vb*{c}_m=[2,-2,0,0]^\top$.
On the other hand, the activity observable is defined by choosing the counting coefficients $\vb*{c}_a=[1,1,0,0]^\top$.
This observable simply counts the number of spin flips occurring at the first site of the spin chain.

We numerically verify the quantum TKUR \eqref{eq:main.result.1} and the response kinetic uncertainty relation \eqref{eq:main.result.3} in the spin chain with $L=5$ sites. One of the parameters $\gamma$ (coupling strength to reservoirs), $\mu$ (boundary driving asymmetry), and $\Delta$ (anisotropy) is varied, while the others are fixed.
As shown in the top panel of Fig.~\ref{fig:SpinChain}, the quantum TKUR remains valid for the magnetization current across all parameter ranges. 
Moreover, the bound accurately captures the behavior of the magnetization precision as the dissipative strength is varied in Fig.~\ref{fig:SpinChain}(b).
In contrast, the conventional TKUR can be significantly violated. Specifically, the current precision is enhanced by quantum coherent dynamics when either the coupling strength to the reservoirs or the boundary driving asymmetry increases. Notably, as illustrated in Fig.~\ref{fig:SpinChain}(c), the conventional TKUR is most strongly violated in the regime $0\le\Delta\le 1$, where transport is known to be ballistic \cite{Landi.2022.RMP}.
These results demonstrate the critical role of quantum effects in enhancing precision and reveal their intricate interplay with thermodynamic quantities in boundary-driven systems.
In the bottom panel of Fig.~\ref{fig:SpinChain}, the response precision of both the magnetization current and the activity observable is consistently constrained by the dynamical activity, confirming the validity of the quantum response kinetic uncertainty relation \eqref{eq:main.result.3}.

\section{Summary and discussion}
In this study, we derived fundamental bounds on the precision and response of trajectory observables for Markovian dynamics. Our first result is the quantum TKUR \eqref{eq:main.result.1}, which provides a lower bound on the relative fluctuations of currents in terms of entropy production and dynamical activity. Unlike the classical version, the new relation includes a quantum correction term $\delta_\phi$, highlighting the role of quantum coherent dynamics in enhancing the precision of observables. Applying this result to quantum heat engines, we derived a novel trade-off relation between power and efficiency, offering insights into designing quantum engines capable of achieving finite power at the Carnot efficiency without divergent fluctuations. A promising direction for future research is exploring this possibility through numerical optimization to identify scenarios where maximal efficiency and minimal quantum correction $|1+\delta_P|$ are achieved at finite power. It is also relevant to derive an analogous relation to \eqref{eq:main.result.1} for the first passage time of currents \cite{Vu.2022.PRL.TUR,Hasegawa.2022.PRE,Kewming.2024.PRA,Smith.2024.arxiv}.

The second result is the quantum inverse uncertainty relation \eqref{eq:main.result.2}, which sets an upper bound on the relative fluctuations of observables in terms of their instantaneous fluctuation and spectral gap. This result complements the first, as combining them yields a sandwich bound for current fluctuations. An open question is whether the upper bound can be refined to include contributions from entropy production.

Our third result is the quantum response kinetic uncertainty relation \eqref{eq:main.result.3}, which provides an upper bound on the precision of the static response of general observables to small parameter changes, expressed in terms of dynamical activity. The result can be experimentally verified, particularly in the context of quantum continuous measurement \cite{Wiseman.2009}, where all the quantities in the bound are measurable. Exploring its connection with the quantum KUR \cite{Hasegawa.2020.PRL,Vu.2022.PRL.TUR,Prech.2025.PRL} and extending the framework to derive a trade-off relation in terms of entropy production for current-type observables remains an intriguing avenue for future research.

\emph{Note added.}---After completing this work, we became aware that Ref.~\cite{Kwon.2024.arxiv} had obtained a similar result to relation \eqref{eq:main.result.3}.

\begin{acknowledgments}
The author thanks Keiji Saito for fruitful discussions.
This work was supported by JSPS KAKENHI Grant No.~JP23K13032.
\end{acknowledgments}

\onecolumngrid

\appendix

\section{Properties of the symmetrized Liouvillian $\widetilde{\mca{L}}_s$}\label{app:sL.proof}
\begin{lemma}\label{lem:sym.Lio}
The symmetrized Liouvillian $\widetilde{\mca{L}}_s$ is negative semi-definite with respect to the inner product $\ev{\cdot,\cdot}_s$ for $s=1/2$. That is, the following inequality holds true for any operator $A$:
\begin{equation}
	\ev{A,\widetilde{\mca{L}}_s(A)}_s\le 0.
\end{equation}
Furthermore, the spectral gap $g_s$ can be expressed in the following variational form:
\begin{equation}
	g_s=-\max_{\ev{A,\mbb{1}}_s=0}\frac{\ev{A,\widetilde{\mca{L}}_s(A)}_s}{\ev{A,A}_s}.
\end{equation}
\end{lemma}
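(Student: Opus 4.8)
The plan is to reduce the negative semi-definiteness of $\widetilde{\mca{L}}_{1/2}$ to the positivity of the semigroup it generates, read off the sign of its spectrum from there, and then obtain the variational formula by the Rayleigh--Ritz principle. Throughout I assume $\pi$ is faithful (otherwise one restricts to its support), so that $\pi^{-1/2}$ is well defined and $\ev{\cdot,\cdot}_{1/2}$ is a genuine positive-definite inner product: $\ev{A,A}_{1/2}=\tr(A^\dagger\pi^{1/2}A\pi^{1/2})=\|\pi^{1/4}A\pi^{1/4}\|^2\ge 0$ with equality only for $A=\mbb{0}$, and it is conjugate-symmetric, so a self-adjoint operator with respect to it has real eigenvalues.

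The first and main step is to show that $e^{t\widetilde{\mca{L}}_{1/2}}$ is a unital completely positive (CP) map for every $t\ge 0$. Using the explicit form $\widetilde{\mca{L}}^*=\pi^{-1/2}\mca{L}(\pi^{1/2}\circ\pi^{1/2})\pi^{-1/2}$ for $s=1/2$ from the main text, one has $e^{t\widetilde{\mca{L}}^*}(A)=\pi^{-1/2}e^{t\mca{L}}(\pi^{1/2}A\pi^{1/2})\pi^{-1/2}$, which is a composition of the CP maps $A\mapsto\pi^{1/2}A\pi^{1/2}$, the CPTP map $e^{t\mca{L}}$, and $X\mapsto\pi^{-1/2}X\pi^{-1/2}$, and is therefore CP; it is unital because $e^{t\widetilde{\mca{L}}^*}(\mbb{1})=\pi^{-1/2}e^{t\mca{L}}(\pi)\pi^{-1/2}=\mbb{1}$, using $\mca{L}(\pi)=0$. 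Since $e^{t\widetilde{\mca{L}}}$ is also unital CP, the Lie--Trotter formula $e^{t\widetilde{\mca{L}}_{1/2}}=\lim_{n\to\infty}\big(e^{(t/2n)\widetilde{\mca{L}}}e^{(t/2n)\widetilde{\mca{L}}^*}\big)^{n}$ exhibits $e^{t\widetilde{\mca{L}}_{1/2}}$ as a limit of unital CP maps, hence unital and positive (and Hermiticity-preserving, being generated by a Hermiticity-preserving map).

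Next I would pin down the sign of the spectrum. Because $\widetilde{\mca{L}}_{1/2}$ is self-adjoint with respect to $\ev{\cdot,\cdot}_{1/2}$ (established in the main text), its eigenvalues $\{\lambda_n\}$ are real, and for each one can choose a Hermitian eigenvector $A_n$. A unital positive Hermiticity-preserving map contracts the operator norm on Hermitian operators (since $\|A\|_{\mathrm{op}}\mbb{1}\pm A\ge 0$ is preserved), so $|e^{t\lambda_n}|\,\|A_n\|_{\mathrm{op}}=\|e^{t\widetilde{\mca{L}}_{1/2}}(A_n)\|_{\mathrm{op}}\le\|A_n\|_{\mathrm{op}}$ for all $t\ge 0$, which forces $\lambda_n\le 0$. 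Expanding an arbitrary $A$ in a $\ev{\cdot,\cdot}_{1/2}$-orthonormal eigenbasis of $\widetilde{\mca{L}}_{1/2}$ then gives $\ev{A,\widetilde{\mca{L}}_{1/2}(A)}_{1/2}=\sum_n\lambda_n|\ev{A_n,A}_{1/2}|^2\le 0$, the asserted negative semi-definiteness.

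For the variational formula, $\widetilde{\mca{L}}_s$ with $s\in\{0,1/2\}$ is self-adjoint and negative semi-definite (the $s=0$ case from Ref.~\cite{Mori.2023.PRL}), annihilates $\mbb{1}$, and has the spectrum $0=\lambda_0>\lambda_1\ge\cdots$ from the main text, so the top eigenvalue $0$ is simple with eigenvector $\mbb{1}$ (note $\|\mbb{1}\|_s^2=\tr\pi=1$). The $\ev{\cdot,\cdot}_s$-orthogonal complement of $\mbb{1}$ is $\widetilde{\mca{L}}_s$-invariant and is spanned by $A_1,\dots,A_{d^2-1}$; for $A=\sum_{n\ge 1}\alpha_n A_n$ the quotient $\ev{A,\widetilde{\mca{L}}_s(A)}_s/\ev{A,A}_s=\sum_{n\ge 1}\lambda_n|\alpha_n|^2/\sum_{n\ge 1}|\alpha_n|^2$ is a convex combination of $\lambda_1,\dots,\lambda_{d^2-1}$, hence at most $\lambda_1$, with equality at $A=A_1$; therefore $\max_{\ev{A,\mbb{1}}_s=0}\ev{A,\widetilde{\mca{L}}_s(A)}_s/\ev{A,A}_s=\lambda_1=-g_s$. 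I expect the only genuine obstacle to be the positivity step: the GKSL structure of $\widetilde{\mca{L}}^*$ is not manifest, and it is the two-sided $\pi^{1/2}$-conjugation identity, special to $s=1/2$, that rescues complete positivity of $e^{t\widetilde{\mca{L}}^*}$ and hence of $e^{t\widetilde{\mca{L}}_{1/2}}$ — which is exactly why $s=1/2$ needs an argument separate from the $s=0$ case. Everything downstream is routine.
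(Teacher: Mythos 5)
Your proof is correct, but it takes a genuinely different route from the paper's. The paper works in the Schr\"odinger picture: it forms $\mca{L}_s=(\mca{L}+\mca{L}^*)/2$ with $\mca{L}^*(\circ)=\pi^{1/2}\widetilde{\mca{L}}(\pi^{-1/2}\circ\pi^{-1/2})\pi^{1/2}$, uses $\mca{L}(\pi)=\mbb{0}$ to show that $2\mca{L}_s$ is itself a GKSL generator with Hamiltonian $H_s$ and the doubled jump set $\{L_k,\,\widetilde{L}_k=\pi^{1/2}L_k^\dagger\pi^{-1/2}\}$ (so its eigenvalues have nonpositive real parts), and then passes to the vectorized similarity transform $\widehat{\mca{L}}_s'=[\pi^{-1/4}\otimes(\pi^{-1/4})^\top]\widehat{\mca{L}}_s[\pi^{1/4}\otimes(\pi^{1/4})^\top]$, which is Hermitian, to get a real spectral decomposition from which both the quadratic-form inequality and the Rayleigh-quotient formula for $g_s$ are read off explicitly. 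You instead stay in the Heisenberg picture and never compute the generator's Lindblad form: you use the $s=1/2$ conjugation identity to write $e^{t\widetilde{\mca{L}}^*}=\Phi^{-1}\circ e^{t\mca{L}}\circ\Phi$ with $\Phi(A)=\pi^{1/2}A\pi^{1/2}$, conclude it is unital CP, invoke Trotter and closedness of the unital CP cone (finite dimension) to get unital positivity of $e^{t\widetilde{\mca{L}}_{1/2}}$, and then use operator-norm contraction on Hermitian eigenvectors plus self-adjointness with respect to the KMS product to force $\lambda_n\le 0$; the variational formula is then standard Rayleigh--Ritz. Both arguments hinge on the same special feature of $s=1/2$ (the two-sided $\pi^{1/2}$ conjugation), but your version is softer and shorter, while the paper's buys explicit structure: the Lindblad form of the symmetrized generator and the vectorized spectral decomposition, which it reuses verbatim to prove the gap formula and in Appendix~\ref{app:proof.qiur}. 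Two small points you should make explicit if this were written out: Hermitian eigenvectors exist because $\widetilde{\mca{L}}_{1/2}$ is Hermiticity-preserving and its eigenvalues are real (so each eigenspace is $\dagger$-invariant), and the orthonormal-eigenbasis expansion at the end requires only self-adjointness with respect to the genuine (positive-definite, conjugate-symmetric) inner product $\ev{\cdot,\cdot}_{1/2}$, which needs $\pi$ faithful---an assumption the paper also makes implicitly by using $\pi^{-1/2}$.
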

\begin{proof}
Define $\mca{L}_s\coloneqq(\mca{L}+\mca{L}^*)/2$, where $\mca{L}^*(\circ)\coloneqq\pi^{1/2}\widetilde{\mca{L}}(\pi^{-1/2}\circ\pi^{-1/2})\pi^{1/2}$.
We can show that $\mca{L}^*$ is an adjoint super-operator of $\widetilde{\mca{L}}^*$ with respect to the inner product $\ev{\cdot,\cdot}$ as follows:
\begin{align}
	\ev{\widetilde{\mca{L}}^*(A),B}&=\tr{\pi^{-1/2}\mca{L}(\pi^{1/2}A\pi^{1/2})^\dagger\pi^{-1/2}B}\notag\\
	&=\ev{\mca{L}(\pi^{1/2}A\pi^{1/2}),\pi^{-1/2}B\pi^{-1/2}}\notag\\
	&=\ev{\pi^{-1/2}B\pi^{-1/2},\mca{L}(\pi^{1/2}A\pi^{1/2})}^*\notag\\
	&=\ev{\widetilde{\mca{L}}(\pi^{-1/2}B\pi^{-1/2}),\pi^{1/2}A\pi^{1/2}}^*\notag\\
	&=\ev{A,\pi^{1/2}\widetilde{\mca{L}}(\pi^{-1/2}B\pi^{-1/2})\pi^{1/2}}\notag\\
	&=\ev{A,\mca{L}^*(B)}.
\end{align}
Therefore, $\mca{L}_s$ is an adjoint super-operator of $\widetilde{\mca{L}}_s$ with respect to the inner product $\ev{\cdot,\cdot}$ because
\begin{align}
	\ev{\widetilde{\mca{L}}_s(A),B}&=\frac{1}{2}\qty[\ev{\widetilde{\mca{L}}(A),B}+\ev{\widetilde{\mca{L}}^*(A),B}]\notag\\
	&=\frac{1}{2}\qty[\ev{A,\mca{L}(B)}+\ev{A,\mca{L}^*(B)}]\notag\\
	&=\ev{A,\mca{L}_s(B)}.
\end{align}
From the definition $\mca{L}_s=(\mca{L}+\mca{L}^*)/2$, we can explicitly express it as
\begin{align}
	2\mca{L}_s(\circ)&=G\circ+\circ G^\dagger+\sum_{k\ge 1}(L_k\circ L_k^\dagger + \widetilde{L}_k\circ\widetilde{L}_k^\dagger),
\end{align}
where we define $\widetilde{H}\coloneqq\pi^{1/2}H\pi^{-1/2}$, $\widetilde{L}_k\coloneqq\pi^{1/2}L_k^\dagger\pi^{-1/2}$, and
\begin{equation}
	G\coloneqq-iH+i\widetilde{H}-\frac{1}{2}\sum_{k\ge 1}\qty(L_k^\dagger L_k+\pi^{1/2}L_k^\dagger L_k\pi^{-1/2}).
\end{equation}
Since $\mca{L}(\pi)=\mbb{0}$, we have
\begin{align}
	G+G^\dagger&=i(\pi^{1/2}H\pi^{-1/2}-\pi^{-1/2}H\pi^{1/2})-\sum_{k\ge 1}L_k^\dagger L_k-\frac{1}{2}\sum_{k\ge 1}(\pi^{1/2}L_k^\dagger L_k\pi^{-1/2}+\pi^{-1/2}L_k^\dagger L_k\pi^{1/2})\notag\\
	&=-\sum_{k\ge 1}L_k^\dagger L_k+\pi^{-1/2}\qty[-i(H\pi-\pi H)-\frac{1}{2}\sum_{k\ge 1}\{L_k^\dagger L_k,\pi\}]\pi^{-1/2}\notag\\
	&=-\sum_{k\ge 1}L_k^\dagger L_k+\pi^{-1/2}\qty(\mca{L}(\pi)-\sum_{k\ge 1}L_k\pi L_k^\dagger)\pi^{-1/2}\notag\\
	&=-\sum_{k\ge 1}(L_k^\dagger L_k+\widetilde{L}_k^\dagger\widetilde{L}_k).
\end{align}
The operator $G$ can be decomposed as the sum of self-adjoint operators as $G=A-iH_s$, where $A=(G+G^\dagger)/2$ and $H_s=(G^\dagger-G)/(2i)$ are self-adjoint operators.
Consequently, it can be easily verified that $\mca{L}_s$ is the generator of a completely positive trace-preserving map,
\begin{align}
	2\mca{L}_s(\circ)&=-i[H_s,\circ]+\sum_{k\ge 1}\qty(L_k\circ L_k^\dagger-\frac{1}{2}\{L_k^\dagger L_k,\circ\}+\widetilde{L}_k\circ \widetilde{L}_k^\dagger-\frac{1}{2}\{\widetilde{L}_k^\dagger\widetilde{L}_k,\circ\}).
\end{align}
Thus, all the eigenvalues of $\mca{L}_s$ have nonpositive real parts.
Additionally, we can show that $\mca{L}_s'\coloneqq\pi^{-1/4}\mca{L}_s(\pi^{1/4}\circ\pi^{1/4})\pi^{-1/4}$ is a self-adjoint super-operator, that is,
\begin{equation}
	\ev{\mca{L}_s'(A),B}=\ev{A,\mca{L}_s'(B)}.
\end{equation}
Indeed, it can be proved as follows:
\begin{align}
	\ev{\mca{L}_s'(A),B}&=\frac{1}{2}\ev{\pi^{-1/4}\mca{L}(\pi^{1/4}A\pi^{1/4})\pi^{-1/4}+\pi^{-1/4}\mca{L}^*(\pi^{1/4}A\pi^{1/4})\pi^{-1/4},B}\notag\\
	&=\frac{1}{2}\ev{\pi^{-1/4}\mca{L}(\pi^{1/4}A\pi^{1/4})\pi^{-1/4}+\pi^{1/4}\widetilde{\mca{L}}(\pi^{-1/4}A\pi^{-1/4})\pi^{1/4},B}\notag\\
	&=\frac{1}{2}\ev{\mca{L}(\pi^{1/4}A\pi^{1/4}),\pi^{-1/4}B\pi^{-1/4}}+\ev{\widetilde{\mca{L}}(\pi^{-1/4}A\pi^{-1/4}),\pi^{1/4}B\pi^{1/4}}\notag\\
	&=\frac{1}{2}\ev{A,\pi^{1/4}\widetilde{\mca{L}}(\pi^{-1/4}B\pi^{-1/4})\pi^{1/4}}+\ev{A,\pi^{-1/4}\mca{L}(\pi^{1/4}B\pi^{1/4})\pi^{-1/4}}\notag\\
	&=\ev{A,\mca{L}_s'(B)}.\label{eq:Lsp.sym}
\end{align}
For any super-operator $\mca{S}$ using calligraphic fonts, we denote by $\widehat{\mca{S}}$ its corresponding vectorization representation, which vectorizes operators as
\begin{equation}
A=\sum_{m,n}a_{mn}\dyad{m}{n}\rightarrow \kvec{A}=\sum_{m,n}a_{mn}\ket{m}\otimes\ket{n}.
\end{equation}
By algebraic calculations, one can show that $\kvec{AB}=(A\otimes\mbb{1})\kvec{B}$ and $\kvec{BA}=(\mbb{1}\otimes A^\top)\kvec{B}$.
It is evident from Eq.~\eqref{eq:Lsp.sym} that $\widehat{\mca{L}}_s'$ is a Hermitian matrix and its spectral decomposition is given by
\begin{equation}
	\widehat{\mca{L}}_s'=\sum_{i=0}^{d^2-1}\lambda_i\kvec{r_i'}\bvec{r_i'},
\end{equation}
where $\{\lambda_i\}$ are real eigenvalues and the eigenvectors $\{\kvec{r_i'}\}$ form an orthogonal and complete basis $\sum_{i=0}^{d^2-1}\kvec{r_i'}\bvec{r_i'}=\mbb{1}$.
Since both $\widehat{\mca{L}}_s$ and $\widehat{\mca{L}}_s'$ have the same eigenvalues, $\{\lambda_i\}$ are real nonpositive numbers sorted in the descending order $\lambda_0\ge\dots\ge\lambda_{d^2-1}$.
Note that $\lambda_0=0$ and $\kvec{r_0'}=\kvec{\pi^{1/2}}$. 
From the relation $\widehat{\mca{L}}_s'=[\pi^{-1/4}\otimes(\pi^{-1/4})^\top]\widehat{\mca{L}}_s[\pi^{1/4}\otimes(\pi^{1/4})^\top]$, the explicit form of $\widehat{\mca{L}}_s$ can be calculated as
\begin{equation}
	\widehat{\mca{L}}_s=[\pi^{1/4}\otimes(\pi^{1/4})^\top]\widehat{\mca{L}}_s'[\pi^{-1/4}\otimes(\pi^{-1/4})^\top]=\sum_{i=0}^{d^2-1}\lambda_i[\pi^{1/4}\otimes(\pi^{1/4})^\top]\kvec{r_i'}\bvec{r_i'}[\pi^{-1/4}\otimes(\pi^{-1/4})^\top].
\end{equation}
Since $\widetilde{\mca{L}}_s$ is the adjoint super-operator of $\mca{L}_s$, we obtain
\begin{equation}
	\widehat{\widetilde{\mca{L}}}_s=\widehat{\mca{L}}_s^\dagger=\sum_{i=0}^{d^2-1}\lambda_i[\pi^{-1/4}\otimes(\pi^{-1/4})^\top]\kvec{r_i'}\bvec{r_i'}[\pi^{1/4}\otimes(\pi^{1/4})^\top].
\end{equation}
Using this form, we can prove the nonnegativity of the inner product $\ev{A,\widetilde{\mca{L}}_s(A)}_s$ as follows:
\begin{align}
	\ev{A,\widetilde{\mca{L}}_s(A)}_s&=\ev{A,\pi^{1/2}\widetilde{\mca{L}}_s(A)\pi^{1/2}}\notag\\
	&=\bvec{A}[\pi^{1/2}\otimes(\pi^{1/2})^\top]\widehat{\widetilde{\mca{L}}}_s\kvec{A}\notag\\
	&=\sum_{i=0}^{d^2-1}\lambda_i\bvec{A}[\pi^{1/4}\otimes(\pi^{1/4})^\top]\kvec{r_i'}\bvec{r_i'}[\pi^{1/4}\otimes(\pi^{1/4})^\top]\kvec{A}\notag\\
	&=\sum_{i=0}^{d^2-1}\lambda_i|\bvec{A}[\pi^{1/4}\otimes(\pi^{1/4})^\top]\kvec{r_i'}|^2\notag\\
	&=\sum_{i=1}^{d^2-1}\lambda_i|\bvec{A}[\pi^{1/4}\otimes(\pi^{1/4})^\top]\kvec{r_i'}|^2\le 0.
\end{align}
Finally, we will prove that
\begin{equation}
	g_s=-\max_{\ev{A,\mbb{1}}_s=0}\frac{\ev{A,\widetilde{\mca{L}}_s(A)}_s}{\ev{A,A}_s}.
\end{equation}
To this end, note that $\ev{A,\mbb{1}}_s=\ev{A,\pi}=\ev{A,\pi^{1/4}\pi^{1/2}\pi^{1/4}}=\bvec{A}[\pi^{1/4}\otimes(\pi^{1/4})^\top]\kvec{\pi^{1/2}}=\bvec{A}[\pi^{1/4}\otimes(\pi^{1/4})^\top]\kvec{r_0'}$ and
\begin{align}
	\ev{A,A}_s&=\ev{A,\pi^{1/2}A\pi^{1/2}}\notag\\
	&=\bvec{A}[\pi^{1/2}\otimes(\pi^{1/2})^\top]\kvec{A}\notag\\
	&=\bvec{A}[\pi^{1/4}\otimes(\pi^{1/4})^\top]\sum_{i=0}^{d^2-1}\kvec{r_i'}\bvec{r_i'}[\pi^{1/4}\otimes(\pi^{1/4})^\top]\kvec{A}\notag\\
	&=\sum_{i=0}^{d^2-1}|\bvec{A}[\pi^{1/4}\otimes(\pi^{1/4})^\top]\kvec{r_i'}|^2\notag\\
	&=|\ev{A,\mbb{1}}_s|^2+\sum_{i=1}^{d^2-1}|\bvec{A}[\pi^{1/4}\otimes(\pi^{1/4})^\top]\kvec{r_i'}|^2.
\end{align}
Using these expressions, we immediately obtain
\begin{equation}
	\max_{\ev{A,\mbb{1}}_s=0}\frac{\ev{A,\widetilde{\mca{L}}_s(A)}_s}{\ev{A,A}_s}=\max_{\ev{A,\mbb{1}}_s=0}\frac{\sum_{i=1}^{d^2-1}\lambda_i|\bvec{A}[\pi^{1/4}\otimes(\pi^{1/4})^\top]\kvec{r_i'}|^2}{\sum_{i=1}^{d^2-1}|\bvec{A}[\pi^{1/4}\otimes(\pi^{1/4})^\top]\kvec{r_i'}|^2}=\lambda_1=-g_s,
\end{equation}
where we apply the inequality $\sum_ia_ix_i/\sum_ix_i\le\max_ia_i$ for any nonnegative numbers $\{x_i\}$ and the maximum is achieved by $\kvec{A}=[\pi^{-1/4}\otimes(\pi^{-1/4})^\top]\kvec{r_1'}$ or $A=\pi^{-1/4}r_1'\pi^{-1/4}$.
\end{proof}

\section{Derivation of the quantum thermo-kinetic uncertainty relation \eqref{eq:main.result.1}}\label{app:proof.qtkur}
Here we provide a detailed derivation of the first main result \eqref{eq:main.result.1}.
Consider an auxiliary GKSL dynamics, which is perturbed from the original one by a parameter $\theta$.
Suppose that we want to estimate $\theta$ from current $\phi$, which can be a biased estimator.
According to the quantum Cram{\'e}r-Rao inequality \cite{Helstrom.1969}, the precision of this estimator is limited by the quantum Fisher information as 
\begin{equation}\label{eq:qCRB}
	\frac{\mvar[\phi]}{(\partial_{\theta}\ev{\phi})^2}\ge\frac{1}{{I}_q(\theta)},
\end{equation}
where ${I}_q(\theta)$ denotes the quantum Fisher information.
Now, we specify the auxiliary dynamics, where the Hamiltonian remains unchanged (i.e., $H_\theta=H$) and the jump operators are perturbed as
\begin{equation}\label{eq:pert.dyn}
	L_{k,\theta}=\sqrt{1+\ell_k\theta}L_k,~\ell_k=\frac{\tr(L_k\pi L_k^\dagger)-\tr(L_{k^*}\pi L_{k^*}^\dagger)}{\tr(L_k\pi L_k^\dagger)+\tr(L_{k^*}\pi L_{k^*}^\dagger)}.
\end{equation}
Inequality \eqref{eq:qCRB} holds true for various values of $\theta$.
Hereafter, we exclusively consider the $\theta\to 0$ limit.
For GKSL dynamics, the quantum Fisher information can be explicitly calculated as \cite{Gammelmark.2014.PRL}
\begin{equation}
{I}_q(0)=4\eval{\partial^2_{\theta_1\theta_2}\ln|\tr \varrho_{\vb*{\theta}}(\tau)|}_{\vb*{\theta}=\vb*{0}},
\end{equation}
where $\varrho_{\vb*{\theta}}(\tau)=e^{\mca{L}_{\vb*{\theta}}\tau}(\pi)$ is an operator evolved according to the following modified super-operator:
\begin{align}
\mca{L}_{\vb*{\theta}}(\varrho)&=-i[H,\varrho]+\sum_{k\ge 1}\sqrt{(1+\ell_k\theta_1)(1+\ell_k\theta_2)}L_k\varrho L_k^\dagger-\frac{1}{2}\sum_{k\ge 1}(1+\ell_k\theta_1)L_k^\dagger L_k\varrho-\frac{1}{2}\sum_{k\ge 1}(1+\ell_k\theta_2)\varrho L_k^\dagger L_k.
\end{align}
To calculate ${I}_q$, it is convenient to use the vectorization representation.
Using this representation, the GKSL equation can be written as $\kvec{\dot\varrho_t}=\widehat{\mca{L}}\kvec{\varrho_t}$, where the operator $\widehat{\mca{L}}$ is defined as
\begin{align}
\widehat{\mca{L}}&\coloneqq-i(H\otimes\mbb{1} - \mbb{1}\otimes H^\top)+\sum_{k\ge 1}\qty[ L_k\otimes L_k^* - \frac{1}{2} ( L_k^\dagger L_k )\otimes\mbb{1} - \frac{1}{2} \mbb{1}\otimes( L_k^\dagger L_k )^\top].
\end{align}
Here $\top$ and $*$ denote the matrix transpose and complex conjugate, respectively.
The operator $\widehat{\mca{L}}$ has eigenvalues $0=\chi_0>\Re(\chi_1)\ge\Re(\chi_2)\ge\dots$ and can be expressed in terms of the eigenvalues and its right and left eigenvectors $\{\kvec{r_i},\kvec{l_i}\}$ as
\begin{equation}
	\widehat{\mca{L}}=\sum_{i>0}\chi_i\kvec{r_i}\bvec{l_i},
\end{equation}
where $\kvec{\pi}\bvec{\mbb{1}}+\sum_{i>0}\kvec{r_i}\bvec{l_i}=\mbb{1}$.
The Moore-Penrose pseudo-inverse of $\widehat{\mca{L}}$ can be defined as
\begin{equation}
	\widehat{\mca{L}}^+\coloneqq\sum_{i>0}\chi_i^{-1}\kvec{r_i}\bvec{l_i}.\label{eq:pseudo.inv}
\end{equation}
Using $\tr \varrho_{\vb*{\theta}}(\tau)=\bvec{\mbb{1}}e^{\widehat{\mca{L}}_{\vb*{\theta}}\tau}\kvec{\pi}$ and applying the equality $\partial_ue^{\widehat{\mca{L}}_ut}=\int_0^t\dd{s}e^{\widehat{\mca{L}}_u(t-s)}\partial_u\widehat{\mca{L}}_ue^{\widehat{\mca{L}}_us}$, the quantum Fisher information can be calculated as follows:
\begin{align}
	{I}_q(0)&=4\eval{\qty[\partial_{\theta_1\theta_2}^2\bvec{\mbb{1}}e^{\widehat{\mca{L}}_{\vb*{\theta}}\tau}\kvec{\pi}-\partial_{\theta_1}\bvec{\mbb{1}}e^{\widehat{\mca{L}}_{\vb*{\theta}}\tau}\kvec{\pi}\partial_{\theta_2}\bvec{\mbb{1}}e^{\widehat{\mca{L}}_{\vb*{\theta}}\tau}\kvec{\pi}]}_{\vb*{\theta}=\vb*{0}}\notag\\
	&=4\eval{\partial_{\theta_2}\bvec{\mbb{1}}\int_0^\tau\dd{t}e^{\widehat{\mca{L}}_{\vb*{\theta}}(\tau-t)}\partial_{\theta_1}\widehat{\mca{L}}_{\vb*{\theta}}e^{\widehat{\mca{L}}_{\vb*{\theta}}t}\kvec{\pi}}_{\vb*{\theta}=\vb*{0}}-4\eval{\bvec{\mbb{1}}\int_0^\tau\dd{t}e^{\widehat{\mca{L}}_{\vb*{\theta}}(\tau-t)}\partial_{\theta_1}\widehat{\mca{L}}_{\vb*{\theta}}e^{\widehat{\mca{L}}_{\vb*{\theta}}t}\kvec{\pi}\bvec{\mbb{1}}\int_0^\tau\dd{t}e^{\widehat{\mca{L}}_{\vb*{\theta}}(\tau-t)}\partial_{\theta_2}\widehat{\mca{L}}_{\vb*{\theta}}e^{\widehat{\mca{L}}_{\vb*{\theta}}t}\kvec{\pi}}_{\vb*{\theta}=\vb*{0}}\notag\\
	&=-4\eval{\bvec{\mbb{1}}\int_0^\tau\dd{t}e^{\widehat{\mca{L}}_{\vb*{\theta}}(\tau-t)}\partial_{\theta_1}\widehat{\mca{L}}_{\vb*{\theta}}e^{\widehat{\mca{L}}_{\vb*{\theta}}t}\kvec{\pi}\bvec{\mbb{1}}\int_0^\tau\dd{t}e^{\widehat{\mca{L}}_{\vb*{\theta}}(\tau-t)}\partial_{\theta_2}\widehat{\mca{L}}_{\vb*{\theta}}e^{\widehat{\mca{L}}_{\vb*{\theta}}t}\kvec{\pi}}_{\vb*{\theta}=\vb*{0}}\notag\\
	&+4\eval{\bvec{\mbb{1}}\int_0^\tau\dd{t}\int_0^{\tau-t}\dd{s}e^{\widehat{\mca{L}}_{\vb*{\theta}}(\tau-t-s)}\partial_{\theta_2}\widehat{\mca{L}}_{\vb*{\theta}}e^{\widehat{\mca{L}}_{\vb*{\theta}}s}\partial_{\theta_1}\widehat{\mca{L}}_{\vb*{\theta}}e^{\widehat{\mca{L}}_{\vb*{\theta}}t}\kvec{\pi}}_{\vb*{\theta}=\vb*{0}}\notag\\
	&+4\eval{\bvec{\mbb{1}}\int_0^\tau\dd{t}\int_0^{t}\dd{s}e^{\widehat{\mca{L}}_{\vb*{\theta}}(\tau-t)}\partial_{\theta_1}\widehat{\mca{L}}_{\vb*{\theta}}e^{\widehat{\mca{L}}_{\vb*{\theta}}(t-s)}\partial_{\theta_2}\widehat{\mca{L}}_{\vb*{\theta}}e^{\widehat{\mca{L}}_{\vb*{\theta}}s}\kvec{\pi}}_{\vb*{\theta}=\vb*{0}}\notag\\
	&+4\eval{\bvec{\mbb{1}}\int_0^\tau\dd{t}e^{\widehat{\mca{L}}_{\vb*{\theta}}(\tau-t)}\partial_{\theta_1\theta_2}^2\widehat{\mca{L}}_{\vb*{\theta}}e^{\widehat{\mca{L}}_{\vb*{\theta}}t}\kvec{\pi}}_{\vb*{\theta}=\vb*{0}}\notag\\
	&=4\qty[-\tau^2\bvec{\mbb{1}}\widehat{\mca{F}}_1\kvec{\pi}\bvec{\mbb{1}}\widehat{\mca{F}}_2\kvec{\pi}+\bvec{\mbb{1}}\widehat{\mca{F}}_2\widehat{\mca{E}}_{\tau}\widehat{\mca{F}}_1\kvec{\pi}+\bvec{\mbb{1}}\widehat{\mca{F}}_1\widehat{\mca{E}}_{\tau}\widehat{\mca{F}}_2\kvec{\pi}+\tau\bvec{\mbb{1}}\eval{\partial_{\theta_1\theta_2}^2\widehat{\mca{L}}_{\vb*{\theta}}\kvec{\pi}}_{\vb*{\theta}=\vb*{0}}],\label{eq:qFI.tmp1}
\end{align}
where we use the facts $\bvec{\mbb{1}}e^{\widehat{\mca{L}}t}=\bvec{\mbb{1}}$ and $e^{\widehat{\mca{L}}t}\kvec{\pi}=\kvec{\pi}$ for any $t\ge 0$, and the operators $\widehat{\mca{E}}_{\tau}$, $\widehat{\mca{F}}_1$, and $\widehat{\mca{F}}_2$ are given by
\begin{align}
	\widehat{\mca{E}}_{\tau}&\coloneqq\int_0^\tau\dd{t}\int_0^{t}\dd{s}e^{\widehat{\mca{L}}s},\notag\\
	\widehat{\mca{F}}_1&\coloneqq\eval{\partial_{\theta_1}\widehat{\mca{L}}_{\vb*{\theta}}}_{\vb*{\theta}=\vb*{0}}=\frac{1}{2}\sum_{k\ge 1}\ell_k\qty[ L_k\otimes L_k^* - (L_k^\dagger L_k )\otimes\mbb{1}],\notag\\
	\widehat{\mca{F}}_2&\coloneqq\eval{\partial_{\theta_2}\widehat{\mca{L}}_{\vb*{\theta}}}_{\vb*{\theta}=\vb*{0}}=\frac{1}{2}\sum_{k\ge 1}\ell_k\qty[ L_k\otimes L_k^* - \mbb{1}\otimes(L_k^\dagger L_k )^\top].
\end{align}
Since $\bvec{\mbb{1}}\widehat{\mca{F}}_1\kvec{A}=\bvec{\mbb{1}}\widehat{\mca{F}}_2\kvec{A}=0$ for any operator $A$, we readily get
\begin{equation}
{I}_q(0)=4\tau\bvec{\mbb{1}}\eval{\partial_{\theta_1\theta_2}^2\widehat{\mca{L}}_{\vb*{\theta}}\kvec{\pi}}_{\vb*{\theta}=\vb*{0}}=\tau\bvec{\mbb{1}}\sum_{k\ge 1}\ell_k^2[L_k\otimes L_k^*]\kvec{\pi}=\tau\sum_{k\ge 1}\ell_k^2\tr(L_k\pi L_k^\dagger).
\end{equation}
Let $\pi=\sum_n\pi_n\dyad{n}$ be the spectral decomposition of $\pi$.
For convenience, we define $w_{mn}^k\coloneqq|\mel{m}{L_k}{n}|^2$, $\sigma_{mn}^k\coloneqq(w_{mn}^k\pi_n-w_{nm}^{k^*}\pi_m)\ln(w_{mn}^k\pi_n/w_{nm}^{k^*}\pi_m)$, and $a_{mn}^k\coloneqq(w_{mn}^k\pi_n+w_{nm}^{k^*}\pi_m)$.
Note that $\sigma=(1/2)\sum_{k\ge 1,m,n}\sigma_{mn}^k$ and $a=(1/2)\sum_{k\ge 1,m,n}a_{mn}^k$ \cite{Vu.2023.PRX}.
We can upper bound the quantum Fisher information by the rates of irreversible entropy production and dynamical activity as follows:
\begin{align}
	{I}_q(0)&=\frac{1}{2}\tau\sum_{k\ge 1}\ell_k^2[\tr(L_k\pi L_k^\dagger)+\tr(L_{k^*}\pi L_{k^*}^\dagger)]\notag\\
	&=\frac{1}{2}\tau\sum_{k\ge 1}\frac{\qty[\tr(L_k\pi L_k^\dagger)-\tr(L_{k^*}\pi L_{k^*}^\dagger)]^2}{\tr(L_k\pi L_k^\dagger)+\tr(L_{k^*}\pi L_{k^*}^\dagger)}\notag\\
	&=\frac{1}{2}\tau\sum_{k\ge 1}\frac{(\sum_{m,n}[w_{mn}^k\pi_n-w_{nm}^{k^*}\pi_m])^2}{\sum_{m,n}[w_{mn}^k\pi_n+w_{nm}^{k^*}\pi_m]}\notag\\
	&\le\frac{1}{2}\tau\sum_{k\ge 1,m,n}\frac{(w_{mn}^k\pi_n-w_{nm}^{k^*}\pi_m)^2}{w_{mn}^k\pi_n+w_{nm}^{k^*}\pi_m}\notag\\
	&=\tau\sum_{k\ge 1,m,n}\frac{(\sigma_{mn}^k)^2}{8a_{mn}^k}\Phi\qty(\frac{\sigma_{mn}^k}{2a_{mn}^k})^{-2}\notag\\
	&\le\tau\frac{\sigma^2}{4a}\Phi\qty(\frac{\sigma}{2a})^{-2}\notag\\
	&\le\tau\min\qty(\frac{\sigma}{2},a).\label{eq:qFI.ub}
\end{align}
Here we use the fact that $(x^2/y)\Phi(x/y)^{-2}$ is a concave function and apply Jensen's inequality to obtain the sixth line.

Next, we calculate the term $\partial_\theta\ev{\phi}$.
For $\theta\ll 1$, the density operator $\varrho_{t,\theta}$ in the auxiliary dynamics \eqref{eq:pert.dyn} can be expanded in terms of $\theta$ as $\varrho_{t,\theta}=\pi+\theta\varphi_t+O(\theta^2)$.
Substituting this perturbative expression to the GKSL equation and collecting the terms in the first order of $\theta$, we obtain the differential equation that describes the time evolution of the operator $\varphi_t$,
\begin{equation}\label{eq:phi.evo}
\dot\varphi_t=\mca{L}(\varphi_t)+\sum_{k\ge 1}\ell_k(L_k\pi L_k^\dagger-\{L_k^\dagger L_k,\pi\}/2),
\end{equation}
where the initial condition is given by $\varphi_0=\mbb{0}$.
It can be easily seen that the operator $\varphi_t$ is always traceless.
Noting that $c_k=-c_{k^*}$ and $c_k\ell_k=c_{k^*}\ell_{k^*}$, the partial derivative of the current average in the auxiliary dynamics with respect to $\theta$ can be calculated as
\begin{align}
\eval{\partial_{\theta}\expval{\phi}}_{\theta=0}&=\eval{\partial_{\theta}\qty[\int_0^\tau\dd{t}\sum_{k\ge 1}c_k(1+\ell_k\theta)\tr{L_k(\pi+\theta\varphi_t)L_k^\dagger}+O(\theta^2)]}_{\theta=0}\notag\\
&=\int_0^\tau\dd{t}\sum_{k\ge 1}c_k\ell_k\tr(L_k\pi L_k^\dagger)+\int_0^\tau\dd{t}\sum_{k\ge 1}c_k\tr(L_k\varphi_tL_k^\dagger)\notag\\
&=\frac{1}{2}\int_0^\tau\dd{t}\sum_{k\ge 1}c_k\ell_k\qty[\tr(L_k\pi L_k^\dagger)+\tr(L_{k^*}\pi L_{k^*}^\dagger)]+\int_0^\tau\dd{t}\sum_{k\ge 1}c_k\tr(L_k\varphi_tL_k^\dagger)\notag\\
&=\frac{1}{2}\int_0^\tau\dd{t}\sum_{k\ge 1}c_k\qty[\tr(L_k\pi L_k^\dagger)-\tr(L_{k^*}\pi L_{k^*}^\dagger)]+\int_0^\tau\dd{t}\sum_{k\ge 1}c_k\tr(L_k\varphi_tL_k^\dagger)\notag\\
&=\tau\sum_{k\ge 1}c_k\tr(L_k\pi L_k^\dagger)+\int_0^\tau\dd{t}\bvec{\mbb{1}}\widehat{\mca{C}}\kvec{\varphi_t}\notag\\
&=\expval{\phi}+\expval{\phi}_\varphi,\label{eq:par.avg.TUR}
\end{align}
where we define $\expval{\phi}_\varphi\coloneqq\int_0^\tau\dd{t}\bvec{\mbb{1}}\widehat{\mca{C}}\kvec{\varphi_t}$ and $\widehat{\mca{C}}\coloneqq\sum_{k\ge 1}c_kL_k\otimes L_k^*$.
Defining the following quantum correction of the current average:
\begin{equation}\label{eq:cur.avg.qcor}
	\delta_\phi\coloneqq\frac{\ev{\phi}_\varphi}{\ev{\phi}},
\end{equation}
we readily obtain the quantum TKUR \eqref{eq:main.result.1} from Eqs.~\eqref{eq:qCRB}, \eqref{eq:qFI.ub}, and \eqref{eq:par.avg.TUR}, 
\begin{equation}
\frac{F_\phi}{(1+\delta_\phi)^2}\ge\frac{4a}{\sigma^2}\Phi\qty(\frac{\sigma}{2a})^2.
\end{equation}
The asymptotic long-time value of $\delta_\phi$ can be analytically calculated.
In the long-time limit, $\varphi_t$ converges to a stationary operator $\varphi_{\rm ss}$, which satisfies the following equation [cf.~\eqref{eq:phi.evo}]:
\begin{equation}
	\widehat{\mca{L}}\kvec{\varphi_{\rm ss}}=-\widehat{\mca{D}}_\ell\kvec{\pi},~\text{where}~\widehat{\mca{D}}_\ell\coloneqq\sum_{k\ge 1}\ell_k\qty[ L_k\otimes L_k^* - \frac{1}{2} ( L_k^\dagger L_k )\otimes\mbb{1} - \frac{1}{2} \mbb{1}\otimes( L_k^\dagger L_k )^\top].\label{eq:phiss.tmp}
\end{equation}
Multiplying the pseudo inverse $\widehat{\mca{L}}^+$ to the both sides of Eq.~\eqref{eq:phiss.tmp} and noting that $\tr\varphi_{\rm ss}=0$, we obtain $\kvec{\varphi_{\rm ss}}=-\widehat{\mca{L}}^+\widehat{\mca{D}}_\ell\kvec{\pi}$.
Thus, the term $\ev{\phi}_\varphi$ can be calculated as
\begin{equation}
	\lim_{\tau\to\infty}\tau^{-1}\ev{\phi}_\varphi=-\bvec{\mbb{1}}\widehat{\mca{C}}\widehat{\mca{L}}^+\widehat{\mca{D}}_\ell\kvec{\pi}.
\end{equation}
Consequently, $\delta_\phi$ can be explicitly expressed as
\begin{equation}\label{eq:deltaJ.def}
	\delta_\phi= -\frac{\bvec{\mbb{1}}\widehat{\mca{C}}\widehat{\mca{L}}^+\widehat{\mca{D}}_\ell\kvec{\pi}}{\bvec{\mbb{1}}\widehat{\mca{C}}\kvec{\pi}}.
\end{equation}
In the classical limit (i.e., $H=\sum_n\epsilon_n\dyad{n}$ and $L_k=\sqrt{\gamma_{mn}}\dyad{m}{n}$), we can calculate as follows:
\begin{align}
	\sum_{k\ge 1}\ell_k(L_k\pi L_k^\dagger-\{L_k^\dagger L_k,\pi\}/2)&=\sum_{m\neq n}\frac{\gamma_{mn}\pi_n-\gamma_{nm}\pi_m}{\gamma_{mn}\pi_n+\gamma_{nm}\pi_m}\qty(\gamma_{mn}\pi_n\dyad{m}-\gamma_{mn}\pi_n\dyad{n})\notag\\
	&=\sum_{m}\sum_{n(\neq m)}\frac{\gamma_{mn}\pi_n-\gamma_{nm}\pi_m}{\gamma_{mn}\pi_n+\gamma_{nm}\pi_m}\qty(\gamma_{mn}\pi_n+\gamma_{nm}\pi_m)\dyad{m}\notag\\
	&=\sum_{m}\dyad{m}\sum_{n(\neq m)}(\gamma_{mn}\pi_n-\gamma_{nm}\pi_m)\notag\\
	&=\mbb{0}.
\end{align}
This means that Eq.~\eqref{eq:phi.evo} reduces to $\dot\varphi_t=\mca{L}(\varphi_t)$ with the initial condition $\varphi_0=\mbb{0}$.
Consequently, $\varphi_t=\mbb{0}$ for all $t\ge 0$.
Therefore, we can show $\delta_\phi=0$ in the classical limit as
\begin{equation}
	\delta_\phi=\ev{\phi}^{-1}\int_0^\tau\dd{t}\bvec{\mbb{1}}\widehat{\mca{C}}\kvec{\varphi_t}=0.\label{eq:deltaJ.cl}
\end{equation}

\subsection{Analogous relation for quantum diffusion unraveling}\label{app:qTKUR.qdu}
Here we demonstrate that an analogous relation can be derived for quantum diffusion unraveling using the same approach.
To this end, we first briefly outline the method of quantum diffusion unraveling \cite{Breuer.2002} (see also Ref.~\cite{Landi.2024.PRXQ} for a review). Unlike quantum jump unraveling, where the quantum state evolves through discontinuous jumps, quantum diffusion unraveling describes a continuous stochastic evolution due to a diffusion-like noise process.
For simplicity, we consider real coherent fields, while generalization to complex cases is straightforward. In this case, the pure state $\ket{\psi_t}$ follows the stochastic differential equation:
\begin{equation}
	d\ket{\psi_t}=\qty(-iH_{\rm eff}+\sum_{k\ge 1}\qty[\ev{L_k^\dagger}_tL_k-|\ev{L_k}_t|^2/2])\ket{\psi_t}dt+\sum_{k\ge 1}\qty(L_k-\ev{L_k}_t)\ket{\psi_t}dW_{k,t}.
\end{equation}
Here, $\{dW_{k,t}\}$ are the independent Wiener increments satisfying $\mbb{E}[dW_{k,t}]=0$ and $\mbb{E}[dW_{k,t}dW_{k',t}]=\delta_{kk'}dt$.
The current observable $\phi$ along a stochastic trajectory is given by
\begin{equation}
	\phi=\int_0^\tau\sum_{k\ge 1}c_k\qty(\ev{L_k+L_k^\dagger}_t\dd{t}+dW_{k,t}).
\end{equation}
Its ensemble average can be calculated as
\begin{equation}
	\ev{\phi}=\int_0^\tau\dd{t}\sum_{k\ge 1}c_k\tr[(L_k+L_k^\dagger)\varrho_t]=\tau\sum_{k\ge 1}c_k\tr[(L_k+L_k^\dagger)\pi].
\end{equation}
The quantum Cram{\'e}r-Rao inequality \eqref{eq:qCRB} remains valid even for quantum diffusion unraveling. The key distinction from the case of quantum jump unraveling lies in the partial derivative of the current observable. In this case, it is evaluated as follows: 
\begin{align}
\eval{\partial_{\theta}\expval{\phi}}_{\theta=0}&=\eval{\partial_{\theta}\qty[\int_0^\tau\dd{t}\sum_{k\ge 1}c_k(1+\ell_k\theta/2)\tr[(L_k+L_k^\dagger)(\pi+\theta\varphi_t)]+O(\theta^2)]}_{\theta=0}\notag\\
&=\frac{\tau}{2}\sum_{k\ge 1}c_k\ell_k\tr[(L_k+L_k^\dagger)\pi]+\int_0^\tau\dd{t}\sum_{k\ge 1}c_k\tr[(L_k+L_k^\dagger)\varphi_t]\notag\\
&=\expval{\phi}+\expval{\phi}_*+\expval{\phi}_\varphi,\label{eq:qdu.par.avg.TUR}
\end{align}
where $\expval{\phi}_*\coloneqq\tau\sum_{k\ge 1}c_k(\ell_k/2-1)\tr[(L_k+L_k^\dagger)\pi]$.
Defining the quantum correction term $\delta_\phi'\coloneqq(\expval{\phi}_*+\expval{\phi}_\varphi)/\expval{\phi}$, we obtain the following quantum TKUR for quantum diffusion unraveling:
\begin{equation}
	\frac{F_\phi}{(1+\delta_\phi')^2}\ge\frac{4a}{\sigma^2}\Phi\qty(\frac{\sigma}{2a})^2.\label{eq:qTKUR.qdu}
\end{equation}
This relation retains the same structure as Eq.~\eqref{eq:main.result.1} derived for quantum jump unraveling, differing only in the quantum contribution $\delta_\phi'$.

\section{Derivation of the quantum inverse uncertainty relation \eqref{eq:main.result.2}}\label{app:proof.qiur}
Here we provide a detailed derivation of the second main result \eqref{eq:main.result.2}.
For notational convenience, we denote $\partial_xf_x$ as $f'_x$ and $\partial_x^2f_x$ as $f''_x$, respectively.
We first calculate the variance of observable $\phi$.
Applying the equality
\begin{equation}
	\partial_ue^{\mca{L}_ut}=\int_0^t\dd{s}e^{\mca{L}_u(t-s)}\mca{L}_u'e^{\mca{L}_us},
\end{equation}
the second moment of observable $\phi$ can be calculated from Eq.~\eqref{eq:C.nth.moments} as
\begin{align}
	\ev{\phi^2}&=-\partial_u^2G_\tau(u)\big|_{u=0}\notag\\
	&=-\int_0^\tau\dd{t}\tr\mca{L}_0''(\pi)-2\int_0^\tau\dd{t}\int_0^{t}\dd{s}\tr{\mca{L}_0'e^{\mca{L}s}\mca{L}_0'(\pi)}.\label{eq:res2.tmp1}
\end{align}
Here, we use the facts that $\mca{L}_0=\mca{L}$, $e^{\mca{L}t}(\pi)=\pi$, and $\tr e^{\mca{L}t}(A)=\tr A$ for any operator $A$.
We now individually calculate the terms in Eq.~\eqref{eq:res2.tmp1}.
For convenience, we define the following self-adjoint operators: $J_1\coloneqq\sum_{k\ge 1}c_kL_k^\dagger L_k$, $J_2\coloneqq \sum_{k\ge 1}c_k^2L_k^\dagger L_k$, and $J_{\pi} \coloneqq\sum_{k\ge 1}c_kL_k\pi L_k^\dagger$.
Evidently, $\tau^{-1}\ev{\phi}=\ev{J_1,\pi}=\tr(J_\pi)$.
Since $\tr\mca{L}_0''(\pi)=-\sum_{k\ge 1}c_k^2\tr(L_k\pi L_k^\dagger)=-\ev{J_2,\pi}$, the first term in Eq.~\eqref{eq:res2.tmp1} becomes
\begin{equation}
	-\int_0^\tau\dd{t}\tr\mca{L}_0''(\pi)=\tau\ev{J_2,\pi}.\label{eq:res2.tmp5}
\end{equation}
In addition, note that $\mca{L}_0'(\circ)=i\sum_{k\ge 1}c_kL_k\circ L_k^\dagger$ and 
\begin{align}
	\tr\mca{L}_0'(\circ)&=i\sum_{k\ge 1}c_k\tr(L_k \circ L_k^\dagger)\notag\\
	&=i\sum_{k\ge 1}c_k\tr(L_k^\dagger L_k\circ)\notag\\
	&=i\ev{J_1,\circ}.
\end{align}
The second term in Eq.~\eqref{eq:res2.tmp1} can be calculated as
\begin{align}
	-2\int_0^\tau\dd{t}\int_0^{t}\dd{s}\tr{\mca{L}_0'e^{\mca{L}s}\mca{L}_0'(\pi)}&=-2i\int_0^\tau\dd{t}\int_0^{t}\dd{s}\ev{J_1,e^{\mca{L}s}\mca{L}_0'(\pi)}\notag\\
	&=-2i\int_0^\tau\dd{t}\int_0^{t}\dd{s}\ev{e^{\widetilde{\mca{L}}s}(J_1),\mca{L}_0'(\pi)}\notag\\
	&=2\int_0^\tau\dd{t}\int_0^{t}\dd{s}\ev{e^{\widetilde{\mca{L}}s}(J_1),J_\pi}\notag\\
	&=2\int_0^\tau\dd{t}(\tau-t)\ev{e^{\widetilde{\mca{L}}t}(J_1),J_\pi}\notag\\
	&=2\int_0^\tau\dd{t}(\tau-t)\qty[\ev{e^{\widetilde{\mca{L}}t}(J_1-\ev{J_1,\pi}\mbb{1}),J_\pi}+\ev{J_1,\pi}^2]\notag\\
	&=\ev{\phi}^2+2\int_0^\tau\dd{t}(\tau-t)\ev{e^{\widetilde{\mca{L}}t}(\bar{J}_1),J_\pi},\label{eq:res2.tmp2}
\end{align}
where we define $\bar{J}_1\coloneqq J_1-\ev{J_1,\pi}\mbb{1}$, which is a self-adjoint operator satisfying $\ev{\bar{J}_1,\pi}=0$.
For each operator $A$, we define the corresponding operator $A_t$ that evolves in the Heisenberg picture as $A_t\coloneqq e^{\widetilde{\mca{L}}t}(A)$.
Applying the Cauchy-Schwarz inequality with respect to the inner product $\ev{\cdot,\cdot}_s$, the last term in Eq.~\eqref{eq:res2.tmp2} can be upper bounded as follows:
\begin{align}
	\ev{e^{\widetilde{\mca{L}}t}(\bar{J}_1),J_\pi}&=\ev{\bar{J}_{1,t},J_\pi-\alpha\pi}\notag\\
	&=\ev{\bar{J}_{1,t},\pi^{-s}(J_\pi-\alpha\pi)\pi^{s-1}}_s\notag\\
	&\le\|\bar{J}_{1,t}\|_s\|\pi^{-s}(J_\pi-\alpha\pi)\pi^{s-1}\|_s\notag\\
	&=\|\bar{J}_{1,t}\|_s\sqrt{\tr{\pi^{s-1}(J_\pi-\alpha\pi)\pi^{-s}\pi^{s}\pi^{-s}(J_\pi-\alpha\pi)\pi^{s-1}\pi^{1-s}}}\notag\\
	&=\|\bar{J}_{1,t}\|_s\sqrt{\tr{\pi^{s-1}(J_\pi-\alpha\pi)\pi^{-s}(J_\pi-\alpha\pi)}}.\label{eq:res2.tmp3}
\end{align}
Here, $\alpha$ is an arbitrary real number.
The last term in Eq.~\eqref{eq:res2.tmp3} can be expressed as a quadratic function of $\alpha$ as
\begin{align}
	\tr{\pi^{s-1}(J_\pi-\alpha\pi)\pi^{-s}(J_\pi-\alpha\pi)}&=\alpha^2-2\alpha\ev{J_1,\pi}+\tr(\pi^{s-1}J_\pi\pi^{-s}J_\pi).
\end{align}
This quantity can be minimized by $\alpha=\ev{J_1,\pi}$ and the minimum is given by
\begin{align}
	\tr(\pi^{s-1}J_\pi\pi^{-s}J_\pi)-\ev{J_1,\pi}^2&=\|\pi^{-s}J_\pi\pi^{s-1}\|_s^2-\ev{J_1,\pi}^2\notag\\
	&=\|\pi^{-s}J_\pi\pi^{s-1}\|_s^2-\ev{\pi^{-s}J_\pi\pi^{s-1},\pi}^2\notag\\
	&=\|\pi^{-s}J_\pi\pi^{s-1}-\ev{J_1,\pi}\mbb{1}\|_s^2.
\end{align}
Here, we use the fact that the following equality holds true for operator $A=\pi^{-s}J_\pi\pi^{s-1}$:
\begin{align}
	\|A-\ev{A,\pi}\mbb{1}\|_s^2&=\tr{[A^\dagger-\ev{A,\pi}\mbb{1}]\pi^s[A-\ev{A,\pi}\mbb{1}]\pi^{1-s}}\notag\\
	&=\|A\|_s^2-\ev{A,\pi}^2.
\end{align}
Consequently, the following upper bound can be attained from Eq.~\eqref{eq:res2.tmp3}:
\begin{align}
	\ev{e^{\widetilde{\mca{L}}t}(\bar{J}_1),J_\pi}&\le\|\bar{J}_{1,t}\|_s\|\pi^{-s}J_\pi\pi^{s-1}-\ev{J_1,\pi}\mbb{1}\|_s.
\end{align}
Next, we evaluate $\|\bar{J}_{1,t}\|_s$.
Note that $\ev{\bar{J}_1,\pi}=0$, which implies $\ev{\bar{J}_{1,t},\mbb{1}}_s=\ev{\bar{J}_{1,t},\pi}=\ev{\bar{J}_1,e^{\mca{L}t}(\pi)}=0$.
Following the approach in Ref.~\cite{Mori.2023.PRL} and applying Lemma \ref{lem:sym.Lio}, we can show that
\begin{align}
	\frac{d}{dt}\ev{\bar{J}_{1,t},\bar{J}_{1,t}}_s&=\ev{\widetilde{\mca{L}}(\bar{J}_{1,t}),\bar{J}_{1,t}}_s+\ev{\bar{J}_{1,t},\widetilde{\mca{L}}(\bar{J}_{1,t})}_s\notag\\
	&=\ev{\bar{J}_{1,t},\widetilde{\mca{L}}^*(\bar{J}_{1,t})}_s+\ev{\bar{J}_{1,t},\widetilde{\mca{L}}(\bar{J}_{1,t})}_s\notag\\
	&=2\ev{\bar{J}_{1,t},\widetilde{\mca{L}}_s(\bar{J}_{1,t})}_s\notag\\
	&=2\ev{\bar{J}_{1,t},\bar{J}_{1,t}}_s\frac{\ev{\bar{J}_{1,t},\widetilde{\mca{L}}_s(\bar{J}_{1,t})}_s}{\ev{\bar{J}_{1,t},\bar{J}_{1,t}}_s}\notag\\
	&\le -2g_s\ev{\bar{J}_{1,t},\bar{J}_{1,t}}_s,
\end{align}
which consequently derives $\|\bar{J}_{1,t}\|_s\le e^{-g_st}\|\bar{J}_1\|_s$.
Therefore, we arrive at the following inequality:
\begin{equation}
	\ev{e^{\widetilde{\mca{L}}t}(\bar{J}_1),J_\pi}\le e^{-g_st}\|\bar{J}_1\|_s\|\pi^{-s}J_\pi\pi^{s-1}-\ev{J_1,\pi}\mbb{1}\|_s=e^{-g_st}\|J_1-\ev{J_1,\pi}\mbb{1}\|_s\|\pi^{-s}J_\pi\pi^{s-1}-\ev{J_1,\pi}\mbb{1}\|_s.
\end{equation}
Using this inequality and taking the time integration in Eq.~\eqref{eq:res2.tmp2}, the second term in Eq.~\eqref{eq:res2.tmp1} can be upper bounded as
\begin{equation}
	-2\int_0^\tau\dd{t}\int_0^{t}\dd{s}\tr{\mca{L}_0'e^{\mca{L}s}\mca{L}_0'(\pi)}\le\ev{\phi}^2+\frac{2(e^{-g_s\tau}+g_s\tau-1)}{g_s^2}\|J_1-\ev{J_1,\pi}\mbb{1}\|_s\|\pi^{-s}J_\pi\pi^{s-1}-\ev{J_1,\pi}\mbb{1}\|_s.\label{eq:res2.tmp4}
\end{equation}
Finally, by combining Eqs.~\eqref{eq:res2.tmp1}, \eqref{eq:res2.tmp5}, and \eqref{eq:res2.tmp4}, we obtain the following upper bound on the variance:
\begin{equation}
	\mvar[\phi]\le \tau\ev{J_2,\pi}+\frac{2(e^{-g_s\tau}+g_s\tau-1)}{g_s^2}\|J_1-\ev{J_1,\pi}\mbb{1}\|_s\|\pi^{-s}J_\pi\pi^{s-1}-\ev{J_1,\pi}\mbb{1}\|_s.\label{eq:res2.tmp6}
\end{equation}
By rearranging Eq.~\eqref{eq:res2.tmp6} and noting that $e^{-g_s\tau}-1\le 0$, we achieve the desired relation \eqref{eq:main.result.2},
\begin{equation}
	F_\phi\le\frac{\ev{J_2,\pi}}{\ev{J_1,\pi}^2}\qty(1+\frac{2}{g_s}\frac{\|J_1-\ev{J_1,\pi}\mbb{1}\|_s\|\pi^{-s}J_\pi\pi^{s-1}-\ev{J_1,\pi}\mbb{1}\|_s}{\ev{J_2,\pi}}).
\end{equation}

\subsection{Analogous relation for quantum diffusion unraveling}\label{app:qiur.qdu}
Here we show that a similar bound can be derived for quantum diffusion unraveling.
We employ the same approach by computing the variance of the observable $\phi$ using its moment generating function $G_\tau(u)=\tr e^{\mca{L}_u\tau}(\pi)$.
In the case of quantum diffusion unraveling, the tilted super-operator $\mca{L}_u$ is explicitly given by \cite{Landi.2024.PRXQ}
\begin{equation}
	\mca{L}_u(\varrho)=\mca{L}(\varrho)+iu\sum_{k\ge 1}c_k(L_k\varrho+\varrho L_k^\dagger)-\frac{u^2}{2}\varrho\sum_{k\ge 1}c_k^2.
\end{equation}
Analogous to the case of quantum jump unraveling, we introduce the self-adjoint operators $J_{1,d}\coloneqq\sum_{k\ge 1}c_k(L_k+L_k^\dagger)$, $J_{2,d}\coloneqq(\sum_{k\ge 1}c_k^2)\mds{1}$, and $J_{\pi,d}\coloneqq\sum_{k\ge 1}c_k(L_k\pi+\pi L_k^\dagger)$ for notational convenience.
The second moment of $\phi$ can be computed as in Eq.~\eqref{eq:res2.tmp1}, where the first term is evaluated as
\begin{align}
	-\int_0^\tau\dd{t}\tr\mca{L}_0''(\pi)&=\tau\sum_{k\ge 1}c_k^2=\tau\ev{J_{2,d},\pi}.
\end{align}
Noting the relations $\mca{L}_0'(\varrho)=i\sum_{k\ge 1}c_k(L_k\varrho+\varrho L_k^\dagger)$, $\tr \mca{L}_0'(\circ)=i\ev{J_{1,d},\circ}$, and $\ev{J_{1,d},\pi}=\tau^{-1}\ev{\phi}$, the second term can be similarly calculated as
\begin{align}
	-2\int_0^\tau\dd{t}\int_0^{t}\dd{s}\tr{\mca{L}_0'e^{\mca{L}s}\mca{L}_0'(\pi)}&=-2i\int_0^\tau\dd{t}\int_0^{t}\dd{s}\ev{J_{1,d},e^{\mca{L}s}\mca{L}_0'(\pi)}\notag\\
	&=2\int_0^\tau\dd{t}(\tau-t)\ev{e^{\widetilde{\mca{L}}t}(J_{1,d}),J_{\pi,d}}\notag\\
	&=2\int_0^\tau\dd{t}(\tau-t)\qty[\ev{e^{\widetilde{\mca{L}}t}(J_{1,d}-\ev{J_{1,d},\pi}\mbb{1}),J_{\pi,d}}+\ev{J_{1,d},\pi}^2]\notag\\
	&=\ev{\phi}^2+2\int_0^\tau\dd{t}(\tau-t)\ev{e^{\widetilde{\mca{L}}t}(\bar{J}_{1,d}),J_{\pi,d}},
\end{align}
where we define $\bar{J}_{1,d}\coloneqq J_{1,d}-\ev{J_{1,d},\pi}\mbb{1}$, which is a self-adjoint operator satisfying $\ev{\bar{J}_{1,d},\pi}=0$.
Following the same procedure as in Eqs.~\eqref{eq:res2.tmp3}--\eqref{eq:res2.tmp6}, we obtain the quantum inverse uncertainty relation for quantum diffusion unraveling,
\begin{equation}
	F_\phi\le\frac{\ev{J_{2,d},\pi}}{\ev{J_{1,d},\pi}^2}\qty(1+\frac{2}{g_s}\frac{\|J_{1,d}-\ev{J_{1,d},\pi}\mbb{1}\|_s\|\pi^{-s}J_{\pi,d}\pi^{s-1}-\ev{J_{1,d},\pi}\mbb{1}\|_s}{\ev{J_{2,d},\pi}}).
\end{equation}
This inequality has the same structure as relation \eqref{eq:main.result.2}, where $J_1$, $J_2$, and $J_{\pi}$ are replaced by their counterparts $J_{1,d}$, $J_{2,d}$, and $J_{\pi,d}$.

\section{Derivation of the quantum response kinetic uncertainty relation \eqref{eq:main.result.3}}\label{app:proof.qrkur}
Here we provide a detailed derivation of the third main result \eqref{eq:main.result.3}.
We employ an approach similar to that in Appendix \ref{app:proof.qtkur}.
Consider an auxiliary GKSL dynamics, where the Hamiltonian remains unchanged (i.e., $H_\theta=H$) and the jump operators are perturbed by a parameter $\theta$ as
\begin{equation}\label{eq:res.per.dyn}
	L_{k,\theta}=e^{\omega_k(\epsilon[1+z_k\theta])/2}V_k,~z_k=\frac{\sign[d_{\omega_k(\epsilon)}\ev{f(\vb*{\phi})}]}{d_\epsilon\omega_k(\epsilon)},
\end{equation}
where $\sign(x)=1$ for $x\ge 0$ and $\sign(x)=-1$ otherwise.
Note that $x\sign(x)=|x|$. 
According to the quantum Cram{\'e}r-Rao inequality, the following inequality holds true:
\begin{equation}\label{eq:q.CR.bound.p}
	\frac{\mvar[f(\vb*{\phi})]}{(\partial_{\theta}\ev{f(\vb*{\phi})})^2}\ge\frac{1}{{I}_q(\theta)},
\end{equation}
where ${I}_q(\theta)$ is the quantum Fisher information.
Again, we consider the $\theta\to 0$ limit.
For the entire time regime, the quantum Fisher information can be explicitly calculated as
\begin{equation}
{I}_q(0)=4\eval{\partial^2_{\theta_1\theta_2}\ln|\tr \varrho_{\vb*{\theta}}(\tau)|}_{\vb*{\theta}=\vb*{0}},
\end{equation}
where $\varrho_{\vb*{\theta}}(\tau)=e^{\mca{L}_{\vb*{\theta}}\tau}(\pi)$ is an operator evolved according to the following modified super-operator:
\begin{align}
\mca{L}_{\vb*{\theta}}(\varrho)&=-i[H,\varrho]+\sum_{k\ge 1}\sqrt{e^{\omega_k(\epsilon[1+z_k\theta_1])+\omega_k(\epsilon[1+z_k\theta_2])}}V_k\varrho V_k^\dagger-\frac{1}{2}\sum_{k\ge 1}e^{\omega_k(\epsilon[1+z_k\theta_1])}V_k^\dagger V_k\varrho-\frac{1}{2}\sum_{k\ge 1}e^{\omega_k(\epsilon[1+z_k\theta_2])}\varrho V_k^\dagger V_k.
\end{align}
Following the same procedure outlined in Appendix \ref{app:proof.qtkur}, we obtain the explicit form of the quantum Fisher information as
\begin{equation}
{I}_q(0)=\tau\epsilon^2\sum_{k\ge 1}\tr(L_k\pi L_k^\dagger)=\tau\epsilon^2a,\label{eq:qFI.pert}
\end{equation}
which is proportional to the dynamical activity.
On the other hand, the partial derivative of the observable average with respect to $\theta$ can be calculated as
\begin{align}
\eval{\partial_{\theta}\expval{f(\vb*{\phi})}}_{\theta=0}&=\eval{\sum_{k\ge 1}\partial_\theta \omega_k(\epsilon[1+z_k\theta])d_{\omega_k(\epsilon[1+z_k\theta])}\ev{f(\vb*{\phi})}}_{\theta=0}\notag\\
&=\epsilon\sum_{k\ge 1}z_kd_\epsilon\omega_k(\epsilon)d_{\omega_k(\epsilon)}\ev{f(\vb*{\phi})}\notag\\
&=\epsilon\sum_{k\ge 1}|d_{\omega_k}\ev{f(\vb*{\phi})}|.\label{eq:p.avg.pert}
\end{align}
Substituting Eqs.~\eqref{eq:qFI.pert} and \eqref{eq:p.avg.pert} into Eq.~\eqref{eq:q.CR.bound.p} yields the desired relation \eqref{eq:main.result.3}.

\subsection{Generalization of relation \eqref{eq:main.result.3} to the transient regime}\label{app:res.kur.tran.gen}
Here we derive the generalization of relation \eqref{eq:main.result.3} to transient processes, where the system starts from an arbitrary quantum state $\varrho_0$. 
To achieve this, we consider an auxiliary dynamics perturbed similarly to Eq.~\eqref{eq:res.per.dyn}. By following the same calculations as in Eq.~\eqref{eq:qFI.tmp1}, the quantum Fisher information can be computed as
\begin{align}
	{I}_q(0)&=-4\eval{\bvec{\mbb{1}}\int_0^\tau\dd{t}e^{\widehat{\mca{L}}_{\vb*{\theta}}(\tau-t)}\partial_{\theta_1}\widehat{\mca{L}}_{\vb*{\theta}}e^{\widehat{\mca{L}}_{\vb*{\theta}}t}\kvec{\varrho_0}\bvec{\mbb{1}}\int_0^\tau\dd{t}e^{\widehat{\mca{L}}_{\vb*{\theta}}(\tau-t)}\partial_{\theta_2}\widehat{\mca{L}}_{\vb*{\theta}}e^{\widehat{\mca{L}}_{\vb*{\theta}}t}\kvec{\varrho_0}}_{\vb*{\theta}=\vb*{0}}\notag\\
	&+4\eval{\bvec{\mbb{1}}\int_0^\tau\dd{t}\int_0^{\tau-t}\dd{s}e^{\widehat{\mca{L}}_{\vb*{\theta}}(\tau-t-s)}\partial_{\theta_2}\widehat{\mca{L}}_{\vb*{\theta}}e^{\widehat{\mca{L}}_{\vb*{\theta}}s}\partial_{\theta_1}\widehat{\mca{L}}_{\vb*{\theta}}e^{\widehat{\mca{L}}_{\vb*{\theta}}t}\kvec{\varrho_0}}_{\vb*{\theta}=\vb*{0}}\notag\\
	&+4\eval{\bvec{\mbb{1}}\int_0^\tau\dd{t}\int_0^{t}\dd{s}e^{\widehat{\mca{L}}_{\vb*{\theta}}(\tau-t)}\partial_{\theta_1}\widehat{\mca{L}}_{\vb*{\theta}}e^{\widehat{\mca{L}}_{\vb*{\theta}}(t-s)}\partial_{\theta_2}\widehat{\mca{L}}_{\vb*{\theta}}e^{\widehat{\mca{L}}_{\vb*{\theta}}s}\kvec{\varrho_0}}_{\vb*{\theta}=\vb*{0}}\notag\\
	&+4\eval{\bvec{\mbb{1}}\int_0^\tau\dd{t}e^{\widehat{\mca{L}}_{\vb*{\theta}}(\tau-t)}\partial_{\theta_1\theta_2}^2\widehat{\mca{L}}_{\vb*{\theta}}e^{\widehat{\mca{L}}_{\vb*{\theta}}t}\kvec{\varrho_0}}_{\vb*{\theta}=\vb*{0}}\notag\\
	&=4\Big[-\int_0^\tau\dd{t}\bvec{\mbb{1}}\widehat{\mca{K}}_1\kvec{\varrho_t}\int_0^\tau\dd{t}\bvec{\mbb{1}}\widehat{\mca{K}}_2\kvec{\varrho_t}+\int_0^\tau\dd{t}\int_0^{\tau-t}\dd{s}\bvec{\mbb{1}}\widehat{\mca{K}}_2e^{\widehat{\mca{L}}s}\widehat{\mca{K}}_1\kvec{\varrho_t}\notag\\
	&+\int_0^\tau\dd{t}\int_0^{t}\dd{s}\bvec{\mbb{1}}\widehat{\mca{K}}_1e^{\widehat{\mca{L}}(t-s)}\widehat{\mca{K}}_2\kvec{\varrho_s}+\int_0^\tau\dd{t}\bvec{\mbb{1}}\eval{\partial_{\theta_1\theta_2}^2\widehat{\mca{L}}_{\vb*{\theta}}\kvec{\varrho_t}}_{\vb*{\theta}=\vb*{0}}\Big],
\end{align}
where the operators $\widehat{\mca{K}}_1$ and $\widehat{\mca{K}}_2$ are given by
\begin{align}
	\widehat{\mca{K}}_1&\coloneqq\eval{\partial_{\theta_1}\widehat{\mca{L}}_{\vb*{\theta}}}_{\vb*{\theta}=\vb*{0}}=\frac{1}{2}\sum_{k\ge 1}\epsilon z_kd_\epsilon\omega_k(\epsilon)\qty[ L_k\otimes L_k^* - (L_k^\dagger L_k )\otimes\mbb{1}],\\
	\widehat{\mca{K}}_2&\coloneqq\eval{\partial_{\theta_2}\widehat{\mca{L}}_{\vb*{\theta}}}_{\vb*{\theta}=\vb*{0}}=\frac{1}{2}\sum_{k\ge 1}\epsilon z_kd_\epsilon\omega_k(\epsilon)\qty[ L_k\otimes L_k^* - \mbb{1}\otimes(L_k^\dagger L_k )^\top].
\end{align}
Noting that $\bvec{\mbb{1}}\widehat{\mca{K}}_1\kvec{A}=\bvec{\mbb{1}}\widehat{\mca{K}}_2\kvec{A}=0$ for any operator $A$ and $|z_kd_\epsilon\omega_k(\epsilon)|=|\sign[d_{\omega_k(\epsilon)}\ev{f(\vb*{\phi})}]|=1$, we obtain
\begin{align}
	I_q(0)&=4\int_0^\tau\dd{t}\bvec{\mbb{1}}\eval{\partial_{\theta_1\theta_2}^2\widehat{\mca{L}}_{\vb*{\theta}}\kvec{\varrho_t}}_{\vb*{\theta}=\vb*{0}}\notag\\
	&=\int_0^\tau\dd{t}\bvec{\mbb{1}}\sum_{k\ge 1}[\epsilon z_kd_\epsilon\omega_k(\epsilon)]^2(L_k\otimes L_k^*)\kvec{\varrho_t}\notag\\
	&=\epsilon^2\int_0^\tau\dd{t}\sum_{k\ge 1}\tr(L_k\varrho_tL_k^\dagger)\notag\\
	&=\epsilon^2\mca{A}_\tau.
\end{align}
On the other hand, the partial derivative of the observable average with respect to $\theta$ retains the same form as in the steady-state case, $\eval{\partial_{\theta}\expval{f(\vb*{\phi})}}_{\theta=0}=\epsilon\sum_{k\ge 1}|d_{\omega_k}\ev{f(\vb*{\phi})}|$.
Consequently, we obtain the generalization of Eq.~\eqref{eq:main.result.3} as
\begin{equation}
	\frac{\|\nabla\ev{f(\vb*{\phi})}\|_1^2}{\mvar[f(\vb*{\phi})]}\le\mca{A}_\tau.
\end{equation}

\subsection{Recovery of the kinetic uncertainty relation in the classical limit}\label{app:ckur.recover}
We show that relation \eqref{eq:main.result.3} derives the classical KUR for Markov jump processes.
In the classical limit, the GKSL dynamics is equivalent to Markov jump dynamics, characterized by the transition matrix ${W}=[w_{mn}]\in\mbb{R}^{d\times d}$.
Let $w_{mn}=e^{\omega_{mn}}$ for each $m\neq n$.
Then, by applying the triangle inequality, we obtain
\begin{equation}
	\|\nabla\ev{\phi}\|_1=\sum_{m\neq n}|d_{\omega_{mn}}\ev{\phi}|\ge\qty|\sum_{m\neq n}d_{\omega_{mn}}\ev{\phi}|.
\end{equation}
In what follows, we prove that $\sum_{m\neq n}d_{\omega_{mn}}\ev{\phi}=\ev{\phi}$, which immediately yields the KUR.
Let $\{c_{mn}\}$ be the counting coefficients, then $\ev{\phi}=\tau\sum_{m\neq n}c_{mn}w_{mn}\pi_n$.
Using this, we can calculate as follows:
\begin{align}
	\sum_{m\neq n}d_{\omega_{mn}}\ev{\phi}&=\tau\sum_{i\neq j}c_{ij}\sum_{m\neq n}d_{\omega_{mn}}(w_{ij}\pi_j)\notag\\
	&=\tau\sum_{i\neq j}c_{ij}\sum_{m\neq n}[(d_{\omega_{mn}}w_{ij})\pi_j+w_{ij}d_{\omega_{mn}}\pi_j]\notag\\
	&=\tau\sum_{i\neq j}c_{ij}\sum_{m\neq n}(\delta_{mi}\delta_{nj}w_{ij}\pi_j+w_{ij}d_{\omega_{mn}}\pi_j)\notag\\
	&=\tau\sum_{i\neq j}c_{ij}w_{ij}\pi_j+\tau\sum_{i\neq j}c_{ij}\sum_{m\neq n}w_{ij}d_{\omega_{mn}}\pi_j\notag\\
	&=\ev{\phi}+\tau\sum_{i\neq j}c_{ij}w_{ij}\sum_{m\neq n}d_{\omega_{mn}}\pi_j.\label{eq:rec.kur.tmp1}
\end{align}
From the equalities ${W}\vpi=\vb*{0}$ and $\vb*{1}^\top\vpi=\vb*{1}$, we get ${Z}\vpi=\vb*{e}_1$, where $e_{nk}=\delta_{nk}$ and ${Z}$ is obtained from ${W}$ by replacing the first row of ${W}$ with $\vb*{1}$.
Here, $\vb*{1}=[1,\dots,1]^\top$ is the all-one vector.
Note that matrix ${Z}$ is invertible \cite{Aslyamov.2024.PRL}.
Taking the derivative of equality ${Z}\vpi=\vb*{e}_1$ with respect to $\omega_{mn}$, we get
\begin{align}
	(d_{\omega_{mn}}{Z})\vpi+{Z}d_{\omega_{mn}}\vpi=0\rightarrow d_{\omega_{mn}}\pi_j=-\vb*{e}_j^\top{Z}^{-1}(d_{\omega_{mn}}{Z})\vpi.
\end{align}
Noting that $\sum_{m\neq n}d_{\omega_{mn}}{Z}={Z}-\vb*{e}_1\vb*{1}^\top$, we obtain
\begin{align}
	\sum_{m\neq n}d_{\omega_{mn}}\pi_j&=-\vb*{e}_j^\top{Z}^{-1}(\sum_{m\neq n}d_{\omega_{mn}}{Z})\vpi\notag\\
	&=-\vb*{e}_j^\top{Z}^{-1}({Z}-\vb*{e}_1\vb*{1}^\top)\vpi\notag\\
	&=-\pi_j+[{Z}^{-1}]_{j1}.\label{eq:rec.kur.tmp2}
\end{align}
From ${Z}\vpi=\vb*{e}_1$, we have $\vpi={Z}^{-1}\vb*{e}_1$, which derives $\pi_j=[{Z}^{-1}]_{j1}$.
Consequently, $\sum_{m\neq n}d_{\omega_{mn}}\pi_j=0$ is obtained from Eq.~\eqref{eq:rec.kur.tmp2}.
Substituting this to Eq.~\eqref{eq:rec.kur.tmp1} yields $\sum_{m\neq n}d_{\omega_{mn}}\ev{\phi}=\ev{\phi}$, which recovers the KUR from inequality \eqref{eq:main.result.3}.

\twocolumngrid

\end{document}